\newcommand{\floor}[1]{\left\lfloor #1 \right\rfloor}
\newcommand{\ceil}[1]{\left\lceil #1 \right\rceil}
\newtheorem{mechanism}{Mechanism}
\def \erre{\mathbb{R}}
\def \MCo{MC}
\def \SCo{SC}
\newtheorem{example}{Example}
\newtheorem{theorem}{Theorem}
\title{Facility Location Problems with Capacity Constraints: Two Facilities and Beyond}
\author{
Gennaro Auricchio$^1$
\and
Zihe Wang$^{2}$\And
Jie Zhang$^1$
\affiliations
$^1$University of Bath, Department of Computer Science\\
$^2$Renmin University of China
\emails
ga647@bath.ac.uk,
wang.zihe@ruc.edu.cn,
jz2558@bath.ac.uk
}
\begin{document}

\maketitle

\begin{abstract}
In this paper, we investigate the Mechanism Design aspects of the $m$-Capacitated Facility Location Problem ($m$-CFLP) on a line. We focus on two frameworks.
In the first framework, the number of facilities is arbitrary, all facilities have the same capacity, and the number of agents is equal to the total capacity of all facilities.
In the second framework, we aim to place two facilities, each with a capacity of at least half of the total agents. 
For both of these frameworks, we propose truthful mechanisms with bounded approximation ratios with respect to the Social Cost (SC) and the Maximum Cost (MC). 
When $m>2$, the result sharply contrasts with the impossibility results known for the classic $m$-Facility Location Problem \cite{fotakis2014power}, where capacity constraints are not considered.
Furthermore, all our mechanisms are
\begin{enumerate*}[label=(\roman*)]
    \item optimal with respect to the MC;
    \item  optimal or nearly optimal with respect to the SC among anonymous mechanisms.
\end{enumerate*} 
For both frameworks, we provide a lower bound on the approximation ratio that any truthful and deterministic mechanism can achieve with respect to the SC and MC. 

\end{abstract}

\section{Introduction}

Mechanism Design aims to establish procedures for aggregating the private information of a group of agents to optimize a social objective.
However, optimizing the social objective solely based on reported preferences frequently results in undesirable manipulation due to the self-interested behavior of the agents.
Therefore, one of the most crucial requirements for a mechanism is the property of \textit{truthfulness}, which ensures that no agent can gain an advantage by misreporting their private information.
Unfortunately, this strict property often conflicts with optimizing the social objective, so the output of a truthful mechanism is oftentimes sub-optimal.
To quantify the loss in efficiency, Nisan and Ronen introduced the concept of \textit{approximation ratio}. 
This quantity represents the highest achievable ratio between the social objective obtained by a truthful mechanism and the optimal social objective among all possible agents' reports \cite{nisan1999algorithmic}.
One of the classic examples of these problems is the $m$-Facility Location Problem ($m$-FLP).
In its most basic form, the $m$-FLP consists in locating $m$ facilities amongst $n$ self-interested agents.
Every agent needs to access a facility, so they would prefer to have one of the facilities placed as close as possible to their position.
Moreover, each facility can serve any number of agents, thus every mechanism just needs to return the positions of the facilities. The agents are then free to decide which facility to use, without considering any possible overload.
%

%
%
%

%
%
%
%
In this paper, we study the $m$-Capacitated Facility Location Problem ($m$-CFLP) on the line \cite{pal2001facility}.
The $m$-CFLP is a natural extension of the $m$-FLP in which every facility has a capacity limit. 
%
Considering facilities with capacity constraints is a natural approach for modeling scenarios where facilities offer a limited resource, as it happens in distribution planning \cite{pochet1988lot} and telecommunication network design \cite{boffey1989location,chardaire1999hierarchical}.
%
%
For instance, the facilities represent servers while the agents represent tasks awaiting execution, or facilities could be grocery shops and agents the customers in need of service.
%
%
%
%

%
%
From a mechanism design perspective, the study of the $m$-FLP and $m$-CFLP differs significantly, allowing to elude the impossibility results known for the $m$-FLP \cite{fotakis2014power,walsh2020strategy}.
%
%
%
In particular, we show that this is the case when we have $m$ facilities with equal capacity $k$ and the number of agents is $n=km$.
For this class of problems, we characterize both the upper and lower bounds of the approximation ratio of anonymous, truthful, and deterministic mechanisms for $m\ge2$, showing that both are bounded, and that they coincide, making the bounds tight.
It is also noteworthy that the study of the $m$-CFLP is contingent upon the specifics of the problem.
%
%
Indeed, the properties of any mechanism depend on factors such as whether different facilities have different capacities \cite{aziz2020facility}, whether the total capacity is larger than the number of agents \cite{ijcai2022p75}, or whether the capacity of each facility is lower than a critical threshold\cite{aziz2020facility}.
%
For this reason, the few results providing tight lower bounds on the approximation ratio are limited to very specific settings \cite{aziz2020facility}.

\paragraph{Our Contribution.}
In this paper, we study two relevant frameworks for the $m$-CFLP from a Mechanism Design perspective.
First, we study the $m$-CFLP with equi-capacitated facilities and no spare capacity, i.e. the $m$-CFLP in which all the facilities have the same capacity, namely $k$, and the total capacity of the facilities equals the number of agents.
%
%
We present two truthful and anonymous mechanisms, the Propagating Median Mechanism (PMM) and the Propagating InnerPoint Mechanism (PIPM).
We show that both the PMM and the PIPM have a bounded approximation ratio with respect to the Social Cost (SC) and the Maximum Cost (MC), regardless of the value of $m$.
This result stands in contrast with the classic results for the $m$-FLP, according to which no mechanism can be deterministic, anonymous, truthful, and achieve a finite approximation ratio when $m>2$, even on the line \cite{fotakis2014power,walsh2020strategy}.
We then present three lower bounds for the approximation ratio for the $m$-CFLP with equi-capacitated facilities and no spare capacities.
In particular:
\begin{enumerate*}[label=(\roman*)]
    \item no truthful and deterministic mechanism achieves an approximation ratio lower than $2$ with respect to MC. Thus, PMM and PIPM are optimal with respect to this metric.
    \item When $k>3$, no truthful and deterministic mechanism can achieve an approximation ratio lower than $3$ with respect to the SC.
    \item No truthful, deterministic, and anonymous mechanism achieves an approximation ratio with respect to SC lower than $\big(\frac{k(m-1)}{2}+1\big)$, if $m$ is odd, or lower than $\big(\frac{km}{2}-1\big)$ if $m$ is even. In particular, the PMM and the PIPM are the best possible truthful, deterministic, and anonymous mechanisms for odd and even $m$, respectively.
\end{enumerate*}

%
We then study the $2$-CFLP with abundant facilities, in which we have two facilities capable to accommodate at least half of the agents.
This framework has been studied under further assumptions: \begin{enumerate*}[label=(\roman*)]
    \item in \cite{aziz2020facility}, the authors studied the case in which $n$ is even, $c_1=c_2=\frac{n}{2}$, and proposed the InnerPoint (IM) Mechanism, 
    \item in \cite{ijcai2022p75}, the author studied the case in which $n$ is odd and $c_1=\ceil{\frac{n}{2}}$, $c_2=\floor{\frac{n}{2}}$, and proposed the InnerChoice (IC) Mechanisms introduced,
    \item again, in \cite{ijcai2022p75}, the author also studied the case in which $n$ is arbitrary but $c_1=c_2$, and proposed the InnerGap (IG) Mechanism.
\end{enumerate*}
However, this is the first time that a study of a framework encompassing all these different cases has been conducted.
We propose the Extended InnerGap (EIG) Mechanism, which generalizes and includes IM, IC, and IG.
The EIG is strong Group Strategyproof, thus truthful, and attains a bounded approximation ratio with respect to the SC and MC.
We then provide a lower bound on the approximation ratio of any truthful mechanism with respect to the SC and MC and show that the EIG is optimal with respect to the MC.
Moreover, the EIG is optimal with respect to the SC whenever $n\ge \bar c+\sqrt{\bar c}$, where $\bar c=\max\{c_1,c_2\}$.
In Table \ref{tab:my_table}, we summarize our findings in terms of lower and upper bounds for the cases we study.
Due to space limits, some proofs are deferred to the Appendix.

\begin{table*}[t]
  \centering
  \begin{NiceTabular}{@{}c@{\hskip 0.3in} | c c c@{\hskip 0.3in} | c c@{}}
    \toprule
     &   \multicolumn{3}{c}{\textbf{Social Cost}} &   \multicolumn{2}{c}{\textbf{Maximum Cost}}  \\
     &  \textbf{LB} &  \textbf{LB*} & \textbf{UB} &  \textbf{LB}  & \textbf{UB} \\
    \hline
    \\[-0.7em]
    \multirow{2}{*}{\makecell{$c_{j}=k$\\$n=km$}} & \multirow{2}{*}{$3$} &  \multirow{2}{*}{\makecell{$\frac{k(m-1)}{2}+1$ \quad ($m$ odd)\\$\frac{km}{2}-1$\quad\quad\quad($m$ even)}} &  \multirow{2}{*}{\makecell{$\frac{k(m-1)}{2}+1$ \quad ($m$ odd)\\$\frac{km}{2}-1$\quad\quad\quad($m$ even)}}  & \multirow{2}{*}{2} & \multirow{2}{*}{2}\\
        &  & & & &  \\[0.2cm]
    \hline
    \\[-0.7em]
    \multirow{2}{*}{\makecell{$c_1,c_2\ge\floor{\frac{n}{2}}$\\$c_1+c_2\ge n$}} & \multirow{2}{*}{3} &   \multirow{2}{*}{$n-\bar c -1$} &  \multirow{2}{*}{$\makecell{\max\{n-\bar c,\frac{\bar c}{n-\bar c}\}-1}$} & \multirow{2}{*}{2} & \multirow{2}{*}{2} \\
    &   &  &   &  &  \\
    %
    \bottomrule
  \end{NiceTabular}
  \caption{Each row contains the Lower and Upper Bounds (LB and UB, respectively) with respect to the Social and Maximum Cost for a different class of problems for the $m$-CFLP. The value $\bar c$ is the maximum capacity of the facilities. The LB column contains the lower bounds for the class of truthful and deterministic mechanisms. The $LB^*$ column contains the lower bounds for the class of mechanisms that are truthful, deterministic, and anonymous. 
  }
  \label{tab:my_table}
\end{table*}

          


\paragraph{Related works.}
The $m$-Facility Location Problem ($m$-FLP) and its variants are relevant problems in several applied fields such as disaster relief \cite{doi:10.1080/13675560701561789}, supply chain management \cite{MELO2009401}, healthcare \cite{ahmadi2017survey}, clustering \cite{hastie2009elements}, and public facilities accessibility \cite{barda1990multicriteria}.
The Mechanism Design study of the $m$-FLP was first explored by Procaccia and Tennenholtz, who laid the foundation of this field in their pioneering work \cite{procaccia2013approximate}.
Subsequently, several mechanisms with small constant approximation ratios for locating one or two facilities on trees, circles, and general graphs have been introduced \cite{10.2307/40800845,DBLP:conf/sigecom/DokowFMN12,DBLP:conf/sigecom/FeldmanW13,DBLP:conf/atal/FilimonovM21,DBLP:conf/sigecom/LuSWZ10,DBLP:conf/wine/LuWZ09,DBLP:journals/aamas/Filos-RatsikasL17,DBLP:conf/sagt/Meir19,DBLP:conf/sigecom/TangYZ20}.
However, these positive results are limited to cases where the mechanism designer needs to place $2$ facilities or the mechanism is not deterministic. 
Indeed, no deterministic, anonymous, and truthful mechanism can place more than two uncapacitated facilities while achieving a bounded approximation ratio even on the line \cite{fotakis2014power,walsh2020strategy}.
The $m$-Capacitated Facility Location Problem ($m$-CFLP) is a natural extension of the $m$-FLP, in which each facility has a maximum number of agents that it can serve \cite{brimberg2001capacitated,pal2001facility,aardal2015approximation}.
The Mechanism Design aspects of the $m$-CFLP have received relatively little attention until recently.
Indeed, the game theoretical framework for the $m$-CFLP that we consider was first introduced in \cite{aziz2020facility}.
In this paper, the authors studied various truthful mechanisms (such as the InnerPoint Mechanism and the Extended Endpoint Mechanism) and studied their approximation ratios.
Notably, only mechanisms capable of locating two facilities achieve a bounded approximation ratio.
A more theoretical analysis of the problem has been then presented in \cite{ijcai2022p75}, where the author demonstrated that no mechanism can locate more than two capacitated facilities while being truthful, anonymous, and Pareto optimal. 
Lastly, papers that deal with different Mechanism Design aspects of the $m$-CFLP are \cite{auricchio2023extended}, where the $m$-CFLP is studied in a Bayesian setting, and \cite{aziz2020capacity}, where the authors investigate the case in which there is only one capacitated facility to place and it cannot accommodate all the agents.

\section{Preliminaries}
%
In this section, we introduce the two $m$-CFLP frameworks and fix the notation.
Throughout the paper, we assume that the agents lay on a line and denote with $\vec x :=(x_1,\dots,x_n)\in \erre^n$ the vector containing their positions.
Moreover, we assume $m>1$, as $1$-CFLP is equivalent to the classic $1$-FLP.

\paragraph*{The $m$-CFLP with equi-capacitated facilities and no spare capacity.}
In the first framework, we have $m$ facilities whose capacity is the same, namely $c_j=k$ for every $j\in[m]$, and the total capacity of the facilities equals the total number of agents, hence $n=mk$.
We call this framework the $m$-CFLP with \textit{equi-capacitated facilities} and \textit{no spare capacity}.
Since in this setting all the facilities have the same capacity, a facility location is defined by two objects:
\begin{enumerate*}[label=(\roman*)]
    \item a $m$-dimensional vector $\vec y=(y_1,\dots,y_m)$ whose entries are the positions of the facilities on the line, and
    \item a matching $\mu\subset [n]\times [m]$ that determines how the agents are assigned to facilities, i.e. $(i,j)\in\mu$ if and only if the agent at $x_i$ is assigned to $y_j$. Due to the capacity constraints, the degree of every vertex $j\in[m]$ according to $\mu$ is at most $k$. Since every agent is assigned to only one facility, the degree of $i\in[n]$ according to $\mu$ is $1$.
\end{enumerate*} 
%

\paragraph*{The $2$-CFLP with abundant facilities.}
In the second framework we consider, we have two facilities whose capacities, namely $c_1$ and $c_2$ are such that $\floor{\frac{n}{2}}\le c_2, c_1\le n-1$.
We call this framework $2$-CFLP with \textit{abundant capacities}.
Since the facilities may have different capacity, eliciting two positions, namely $y_1$ and $y_2$, and an agent-to-facility assignment $\mu$ is not sufficient, as we need to also specify the capacity of the two facilities.
In particular, in this framework, a facility location is defined by three objects:
\begin{enumerate*}[label=(\roman*)]
    \item a bi-dimensional vector $\vec y=(y_1,y_2)$ whose entries are the positions of the facilities,
    \item a permutation $\pi:[2]\to[2]$ that specifies the capacity of each facility, so that if $\pi(1)=j$ the facility at $y_1$ has capacity $c_j$ and the facility built at $y_2$ has capacity $c_i$ with $i\neq j$ and $i,j\in[2]$, and
    \item a matching $\mu\subset [n]\times [2]$ that determines how the agents are assigned to facilities. The degree of every vertex $j\in[2]$ according to $\mu$ must be at most $c_{\pi(j)}$, while the degree of every $i\in[n]$ according to $\mu$ is $1$.
\end{enumerate*}

\paragraph{Mechanism Design Framework for the $m$-CFLP.}
In both frameworks, given the positions of the facilities $\vec y$ and a matching $\mu$, we define the cost of an agent positioned in $x_i$ as $c_{i,\mu}(x_i,\vec y)=|x_i-y_j|$, where $(i,j)$ is the unique edge in $\mu$ adjacent to $i$.
Finally, a cost function is a map $C_\mu:\erre^n\times \erre^m\to [0,+\infty)$ that associates to $(\vec x , \vec y)$ the overall cost of placing the facilities at $\vec y$ and assigning the agents positioned at $\vec x$ according to $\mu$.\footnote{In what follows, we omit $\mu$ from the indexes of $c$ and $C$ if it is clear from the context which matching we are considering.}
For both frameworks, given a vector $\vec x\in\erre^n$ containing the agents' positions, the $m$-\textit{Capacitated Facility Location Problem} with respect to the cost $C$, consists in finding the locations for $m$ facilities and a matching $\mu$ that minimize the function $\vec y \to C(\vec x,\vec y)$.
Throughout the paper, we consider the \textit{Social Cost} ($\SCo$), defined as the sum of all the agents' costs, i.e., $\SCo(\vec x, \vec y)=\sum_{i\in [n]}c_i(x_i,\vec y)=\sum_{i\in[n]}|x_i- y_j|$ and the \textit{Maximum Cost} ($\MCo$), defined as the maximum cost among all agents' costs, i.e. $\MCo(\vec x, \vec y):=\max_{i\in [n]}c_i(x_i,\vec y)$. 
A mechanism for the $m$-CFLP is a function $f$ that takes the private information of $n$ self-interested agents as input and returns a facility location.
Thus, for the $m$-CFLP with equi-capacitated facilities and no spare capacity, the mechanism returns a set of locations $\vec y$ and a matching $\mu$ between the agents and the facilities.
%
%
A mechanism $f$ is said to be \textit{truthful} (or \textit{strategy-proof}) if, for every agent, its cost is minimized when it reports its true position, i.e., for any $x_i'\in \erre$, we have $c_i(x_i,f(\vec x))\le c_i(x_i,f(\vec x_{-i},x_i'))$ where $x_i$ is the agent's real position and $\vec x_{-i}$ is the vector $\vec x$ without its $i$-th component.
%
%
A mechanism $f$ is strong Group Strategyproof (GSP) if no group of agents can misreport their positions in such a way that
\begin{enumerate*}[label=(\roman*)]
    \item the cost of every agent in the group after manipulating is less than or equal to the cost they would get by reporting truthfully,
    \item at least one of the agents in the group incurs a strictly lower cost after the group manipulation.
\end{enumerate*}

Albeit a truthful mechanism prevents agents from misreporting their positions, their output is usually suboptimal.
To evaluate this efficiency loss, we consider the approximation ratio of the mechanism introduced in \cite{nisan1999algorithmic}.
Given a truthful mechanism $f$, its approximation ratio with respect to the SC is defined as $ar_{SC}(f):=\sup_{\vec x \in \mathbb{R}^{n}}\frac{SC_{f}(\vec x)}{SC_{opt}(\vec x)}$, where $SC_f(\vec x)$ is the SC of the solution returned by $f$ and $SC_{opt}(\vec x)$ is the optimal SC achievable on instance $\vec x$.
Similarly, the approximation ratio of $f$ with respect to the MC (namely, $ar_{MC}(f)$) is the highest ratio between the MC achieved by $f$ and the optimal MC.
%

\section{The \texorpdfstring{$m$}{m}-CFLP with Equi-capacitated Facilities and no Spare Capacity}
\label{sec:beyond}

In this section, we focus on the Mechanism Design aspects of the $m$-CFLP with equi-capacitated facilities and no spare capacity, i.e. given $m$ the number of facilities and their capacity $k$, it holds $n=mk$.
%
%
Given $j\in [m]$ and a vector $\vec{x}$ containing the agents' positions ordered from left to right, i.e. $x_i\le x_{i+1}$ for every $i\in [n-1]$, we define $I_j$ as follows
\begin{equation}
\label{eq:jcluster}
I_j=\{x_{(j-1)k+i}\;\;\text{where }\; i\in[k]\}.
\end{equation}
Notice that, since every facility has the same capacity, the optimal solution to the $m$-CFLP with respect to the SC places the facilities at the positions $\vec{y}=(y_1,\dots,y_m)$, where each $y_j$ is a median of set $I_j$, and then it assigns every agent whose position is in $I_j$ to $y_j$.
Likewise, the optimal solution with respect to the MC places the facilities at $y_j=\frac{x_{(j-1)k+1}+x_{jk}}{2}$ and then it assigns the agents in $I_j$ to $y_j$.

\subsection{The Propagating Median Mechanism}

We now introduce and study our first truthful mechanism for the $m$-CFLP with equi-capacitated facilities and no spare capacity, the Propagating Median Mechanism (PMM).

\begin{mechanism}[Propagating Median Mechanism (PMM)]
    Let $n$ be the total number of agents, $m$ the number of facilities to place, and $k=\frac{n}{m}\in \mathbb{N}$ the capacity of each facility.
    Let us set $r=\floor{\frac{m+1}{2}}$.
    The routine of the PMM is as follows:
    \begin{enumerate*}[label=(\roman*)]
        \item First, we locate the facility $y_r$ at one of the medians of $I_r$, i.e. $y_r=x_{k(r-1)+\floor{\frac{k+1}{2}}}$.
        \item Second, we determine the positions of the other facilities via the following iterative routine. 
        For any $l\ge r$, let us be given the position of the $l$-th facility, namely $y_l$. Then the position of the $(l+1)$-th facility is $y_{l+1}=\max\{x_{kl+1},x_{kl}+d_l\}$, where $d_l$ is the distance between $y_{l}$ and $x_{kl}:=\max_{x\in I_{l}}\; x $.
        Similarly, given $l\le r$ and the position of the $l$-th facility, namely $y_l$, the $(l-1)$-th facility is placed at $y_{l-1}=\min\{x_{k(l-1)},x_{k(l-1)+1}-d_l\}$, where $d_l$ is the distance between $y_l$ and $x_{k(l-1)+1}:=\min_{x\in I_{l}}\; x$.
        \item Finally, all the agents in $I_j$ are assigned to $y_j$.
    \end{enumerate*}
    
\end{mechanism}

%
%

%
Let $\vec y=(y_1,\dots,y_m)$ be the position of the facilities returned by the PMM on a given instance $\vec x$.
It is easy to see that the entries of $\vec y$ are non-decreasing, i.e. $y_j\le y_{j+1}$ for every $j\in [m-1]$.
Moreover, the PMM assigns every agent to its closest facility, so that $\min_{\ell\in [m]}|x_i-y_\ell|=|x_i-y_j|$, where $j\in[m]$ is the only index for which it holds $x_i\in I_j$.
%

\begin{theorem}
\label{thm:PMM_tr}
    The PMM is truthful.
\end{theorem}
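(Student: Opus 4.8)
The plan is to verify truthfulness one agent at a time: fix an agent $a$ with true position $x$, freeze everyone else's reports, and study $Y(t)$, the position of the facility to which $a$ is matched when it reports $t$. Writing $z_1\le\dots\le z_{n-1}$ for the sorted reports of the other agents, the report $t$ enters the outcome only through $a$'s rank in the merged order, and as $t$ sweeps from $-\infty$ to $+\infty$ this rank runs through $1,\dots,n$, so $a$ visits every cluster $I_1,\dots,I_m$ in turn. The goal is to show $|x-Y(x)|\le|x-Y(t)|$ for all $t$.

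The first key step is a localization lemma: conditioned on $a$ landing in a fixed cluster $I_j$, every cluster strictly between the central cluster $I_r$ and $I_j$ consists entirely of the other agents (since $a$'s rank excludes it from those blocks), so the whole propagation up to the facility adjacent to $I_j$ is constant in $t$. Unrolling the mechanism then gives an explicit one-variable formula for the matched facility. For $j>r$ it reads $Y(t)=\max\{\min(t,w_j),c_j\}$, where $w_j=z_{(j-1)k+1}$ is the smallest other agent in $I_j$ and $c_j=z_{(j-1)k}+|y_{j-1}-z_{(j-1)k}|$ is the outward reflection of the (constant) facility $y_{j-1}$ across the (constant) right endpoint $z_{(j-1)k}$ of $I_{j-1}$; the case $j<r$ is the mirror image, and $j=r$ reduces to a fixed order statistic (median) of $I_r$. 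Because the propagation uses the \emph{distance} $d_l=|y_l-x_{kl}|$, we obtain $c_j\ge z_{(j-1)k}$ in every case, which is exactly what pins the outer branch of the $\max$.

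Two facts follow at once. First, over the report range that keeps $a$ in $I_j$, the map $t\mapsto Y(t)$ is the projection of $t$ onto the fixed interval $[c_j,\max\{c_j,w_j\}]$, and since projection onto a fixed interval is strategyproof, no deviation that keeps $a$ in $I_j$ can bring the facility closer to $x$. Second, $Y(\cdot)$ is non-decreasing, and its behavior at each cluster boundary is controlled: I will show that the facility value just below a boundary report equals the facility of the block formed by the other agents alone (the adjacent block's composition varies continuously as $t$ crosses the boundary), and that crossing the boundary reflects this value outward about the boundary report, again through $d_l=|\cdot|$. Hence across each boundary $Y$ either is continuous or jumps \emph{upward} across a gap that is symmetric about the boundary report.

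Assembling the pieces proves the theorem. The reachable set $\{Y(t):t\in\erre\}$ is a union of intervals separated by gaps, each gap centered at the report at which $a$ switches clusters. Since $a$'s true position $x$ lies strictly inside the report window of its own cluster, it sits strictly between the centers of the two adjacent gaps, hence on the near side of each; therefore the closest reachable facility to $x$ is attained inside $a$'s own cluster interval, and that is precisely the projection $Y(x)$ delivered by truthful reporting (boundary ties only make $a$ indifferent between the two near endpoints, so equality still holds). I expect the third paragraph to be the main obstacle: it requires checking that passing from $I_{j-1}$ to $I_j$ leaves the adjacent block's composition continuous and that the outward reflection built into $d_l$ forbids any downward jump, which is the technical heart that makes the symmetric-gap picture — and hence strategyproofness — go through.
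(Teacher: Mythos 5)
Your proposal is correct, but it takes a genuinely different route from the paper. The paper argues by contradiction with a direct case analysis: it fixes a manipulator, splits on whether its true position lies in the central cluster $I_r$ or in some $I_j$ with $j\neq r$, and on whether the misreport lands it in the same or a different cluster, showing in each case that the assigned facility cannot move closer. You instead derive a closed-form expression for the report-to-assigned-facility map $Y(t)$ and show it is the projection of the report onto a fixed union of intervals whose gaps are centered at the cluster-switching reports --- a standard sufficient condition for one-dimensional strategyproofness. Your localization lemma is sound (when the agent's rank places it in block $I_j$ with $j>r$, blocks $I_1,\dots,I_{j-1}$ are exactly the smallest $(j-1)k$ reports of the \emph{other} agents, so $y_r,\dots,y_{j-1}$ are constant in $t$), and the formula $Y(t)=\max\{\min(t,w_j),c_j\}$ with $c_j\ge z_{(j-1)k}$ is exactly what the mechanism's update rule gives. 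The two claims you flag as the technical heart also check out: as $t\uparrow z_{jk}$ one gets $Y(t)\to\max\{w_j,c_j\}$, which coincides with $y_j$ computed from the others alone once $a$ has moved to block $j+1$, and then $c_{j+1}=z_{jk}+|y_j-z_{jk}|$ places the next interval's left endpoint at the reflection of $y_j$ about the boundary report $z_{jk}$ (or at $y_j$ itself when $y_j>z_{jk}$, a degenerate gap), so $Y$ is non-decreasing with symmetric gaps and the true position, lying on the near side of both adjacent gap centers, is best served by its own cluster's projection. One small imprecision: for $j=r$ the assigned facility is not a \emph{fixed} order statistic but the $\lfloor(k+1)/2\rfloor$-th order statistic of a set containing $t$, i.e.\ again a projection onto a fixed interval of the others' order statistics --- this is what you need and it holds, so the argument is unaffected. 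What your approach buys is a global structural picture (monotonicity of $Y$ and the exact reachable set), which makes the cross-cluster deviations fall out of a single symmetric-gap argument rather than the paper's separate subcases; the cost is that you must carry out the boundary bookkeeping for both sides of $r$ and for the central block, which is roughly the same volume of case work the paper does, just organized differently.
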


\begin{proof}
    Toward a contradiction, let $x_i$ be the real position of an agent able to manipulate by reporting $x_i'$ instead of its real position $x_i$.
    We denote with $\vec y$ and $I_j$ the positions of the facilities returned by the PMM and the sets in \eqref{eq:jcluster} on the truthful input, respectively.
    Similarly, we denote with $\vec y\,'$ and $I_j'$ the positions of the facilities returned by the PMM and the sets in \eqref{eq:jcluster} when the agent at $x_i$ reports $x_i'$, respectively.
    Since the other case is symmetric, we assume that $y_{r}\le x_i$, where $r=\floor{\frac{m+1}{2}}$ and $y_r=x_{k(r-1)+\floor{\frac{k+1}{2}}}$, is a median of $I_r$.
    We show that no agent in $I_{r}$ can manipulate, the case in which $x_i\notin I_r$ is similar and deferred to the Appendix.
    Toward a contradiction, let us assume that $x_i\in I_{r}$.
    Then, if $x_i=y_{r}$, the cost of the agent is null, thus it cannot benefit by misreporting.
    Therefore, it must be that $y_{r}<x_i$.
    If $x_i'<y_{r}$, we have that $x_i'\in I_\ell'$ with $\ell\le r$.
    In this case, we have that $y_r'\le y_r$, thus $y_l'\le\dots\le y_{r}'\le y_{r} < x_i$, which means that the manipulating agent is assigned to a facility that is not closer than $y_{r}$, hence its cost does not decrease after the manipulation.
    Finally, let us consider the case $y_r<x_i'$.
    In this case, we have that $y'_{r}=y_{r}$, since the median of $I_r$ is the same regardless of whether the manipulating agent reports truthfully or not.
    Thus, if $x_i'\in I_r'$, it will still be assigned to $y_{r}'=y_{r}$, which brings no benefit to the manipulative agent.
    So it must be that $x_i'\notin I_{r}'$, then $x_i'\in I_\ell'$, where $\ell>r$, thus the manipulating agent is assigned to $y_{\ell}'\ge y_{r+1}'$, since $\ell>r$.
    Let us denote with $x_{rk}'$ the position of the $(rk)$-th agent from the left in the manipulated instance $(x_i',x_{-i})$.
    Since $x_i'>x_i$, we have $x_{rk}'\ge x_{rk}\ge x_i$.
    We have that $y_{r+1}'=\max\{x_{rk+1}', x_{rk}'+|y_r'-x_{rk}'|\}\ge x_{rk}'+|y_{r}'-x_{rk}'| \ge x_i+|y_{r}-x_{rk}'| \ge x_i+|y_{r}-x_{i}|$, thus the cost of being assigned to $y_\ell' $ is no less than the cost of being assigned to $y_{r}$. 
\end{proof}

Although the PMM is truthful, it is not strong GSP, as the following example shows.

\begin{example}
\label{ex:noGSP}
    Let us fix $n=9$, $m=3$, and $k=3$.
    Let us consider the following instance: $x_1=x_2=x_3=0$, $x_4=x_5=1$, $x_6=2$, $x_7=2.5$, and $x_8=x_9=4$.
    The PMM places the facilities at $y_1=0$, $y_2=1$, and $y_3=3$.
    In this instance, the agents at $x_6$ and $x_7$ can collude: indeed, if $x_6$ reports $x_6'=1$, the PMM places the facilities at $y_1=0$, $y_2=1$, and $y_3=2.5$, thus the cost of $x_7$ decreases.
\end{example}

To conclude, we provide an analysis of the approximation ratio of the PMM with respect to the SC and MC.
In particular, we prove that $ar_{SC}(\textrm{PMM})$ and $ar_{MC}(\textrm{PMM})$ are finite.

\begin{theorem}
\label{thm:PMM_SC}
    It holds $ar_{SC}(\textrm{PMM})=k\floor{\frac{m}{2}}+1$.
\end{theorem}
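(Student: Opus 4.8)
The statement is an exact value for an approximation ratio, so the plan is to prove two matching inequalities: an upper bound $ar_{SC}(\textrm{PMM})\le k\floor{\frac{m}{2}}+1$ valid on every instance, and a lower bound exhibiting a family of instances on which the ratio is attained.

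For the upper bound, the starting point is that the PMM and the $SC$-optimal solution partition the sorted agents into the \emph{same} consecutive blocks $I_1,\dots,I_m$ and assign each block to a single facility; the optimum places that facility at the block median $m_j$, so $SC_{opt}=\sum_j OPT_j$ with $OPT_j=\sum_{x\in I_j}|x-m_j|$, while $SC_{\mathrm{PMM}}=\sum_j\sum_{x\in I_j}|x-y_j|$. By the triangle inequality the extra cost of block $j$ is at most $k\,|y_j-m_j|$, and since $y_r=m_r$ the central block contributes nothing. It therefore suffices to show $\sum_{j\neq r}|y_j-m_j|\le \floor{\frac{m}{2}}\cdot SC_{opt}$. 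I would control these displacements through the auxiliary quantities $d_l=|y_l-x_{kl}|$ appearing in the mechanism: a large inter-cluster gap forces $y_{l+1}$ to the near endpoint of $I_{l+1}$ (the first branch of the $\max$), whereas a large push makes $d_{l+1}$ strictly smaller than $d_l$ (the second branch), so that by induction $d_l\le \sum_{j\text{ between }r\text{ and }l}OPT_j$, using $OPT_j\ge\mathrm{diam}(I_j)$ for $k\ge 2$. Converting the $d_l$ bounds into displacement-from-median bounds and summing then yields the claimed factor.

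The main obstacle is obtaining \emph{exactly} the constant $\floor{\frac{m}{2}}$ rather than something larger, since naive per-block bounds double-count the cumulative optima and overshoot. Two observations resolve this. First, the number of blocks strictly on one side of $y_r$ is exactly $\floor{\frac{m}{2}}$, and $\ceil{\frac{m}{2}}-1$ on the other side, so the compounding runs over at most $\floor{\frac{m}{2}}$ steps per side. Second---and this is what ties the two sides together---the rightward and leftward pushes both originate from the spread of the central block and share its budget: $OPT_r\ge (x_{kr}-m_r)+(m_r-x_{k(r-1)+1})=d_r+d_r^{\textrm{left}}$, so a large rightward displacement precludes a large leftward one. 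Handling this budget sharing carefully, together with the contraction of the $d_l$, is the crux of the upper bound.

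For the lower bound I would place every left block and the central block essentially at $0$---the central block being $\{0^{k-1},1\}$, whose median is $0$ for $k\ge 2$---while each of the $m-r=\floor{\frac{m}{2}}$ right blocks is $\{1^{k}\}$. Then $SC_{opt}=1$, the central facility sits at its median $0$, and because $d_l=1$ persists along the rightward recursion the propagation pushes every right facility to $2$; hence each right block pays $k$ and $SC_{\mathrm{PMM}}=1+k\floor{\frac{m}{2}}$, giving the ratio exactly $k\floor{\frac{m}{2}}+1$ and matching the upper bound.
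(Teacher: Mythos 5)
Your proposal is correct, uses the same tight instance as the paper for the lower bound, but reaches the upper bound by a genuinely different route. The paper never measures how far a facility sits from its block median: it charges the \emph{entire} PMM cost of each ``propagated'' block $I_j$ (one whose facility lies outside $[x_{(j-1)k+1},x_{jk}]$) back to the block $I_\ell$ that originated the propagation, via $SC_{\textrm{PMM}}(I_j)\le k|y_\ell-x_{k\ell}|\le k\,SC_{opt}(I_\ell)$, and then bounds the resulting per-block coefficients ($k\lambda_j+k-1$ for off-center originating blocks, $k\Gamma+1$ for the central one, where $\lambda_j$ and $\Gamma$ count how many facilities each block propagates) by $k\floor{\frac{m}{2}}+1$. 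You instead write $SC_{\textrm{PMM}}\le SC_{opt}+k\sum_{j\ne r}|y_j-m_j|$ and reduce everything to the displacement estimate $\sum_{j\ne r}|y_j-m_j|\le\floor{\frac{m}{2}}\,SC_{opt}$. The aggregation step you flag as ``the crux'' is not a gap: it closes exactly as you indicate, since $|y_j-m_j|\le\max\{d_{j-1},SC_{opt}(I_j)\}$ and $d_j\le\max\{d_{j-1},SC_{opt}(I_j)\}$ (using $\mathrm{diam}(I_j)\le SC_{opt}(I_j)$ for $k\ge2$; note the push only makes $d_{l+1}$ \emph{no larger}, not necessarily strictly smaller), so each side's total displacement is at most its block count times $\max\{d_r^{\text{side}},\max_{j\text{ on that side}}SC_{opt}(I_j)\}$, the two side budgets sum to at most $SC_{opt}$ because $d_r+d_r^{\textrm{left}}\le SC_{opt}(I_r)$ and the remaining maxima come from disjoint blocks, and the side counts are $\floor{\frac{m}{2}}$ and $\ceil{\frac{m}{2}}-1\le\floor{\frac{m}{2}}$. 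What your decomposition buys is a clean separation between where the optimum places a facility and how far PMM drags it, making the shared central budget explicit; the paper's charging scheme is more direct but hides that sharing inside $\Gamma=\max\{\gamma_r,\gamma_l\}$. Both arguments implicitly require $k\ge2$, both for $\mathrm{diam}(I_j)\le SC_{opt}(I_j)$ and for the tight instance's central median to sit at $0$.
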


\begin{proof}
    Let $r=\floor{\frac{m+1}{2}}$ and $\vec x$ be a vector containing all the agents' reports ordered from left to right.
    Let $SC_{opt}(\vec x)$ be the optimal SC for $\vec x$, then it holds that $SC_{opt}(\vec x)=\sum_{j\in [m]}SC_{opt}(I_j)$, where $SC_{opt}(I_j)$ is the SC of the agents whose report is in $I_j$ according to the optimal solution.
    Similarly, let $SC_{\textrm{PMM}}(\vec x)$ be the SC of instance $\vec x$ according to the output of PMM, then it holds that $SC_{\textrm{PMM}}(\vec x)=\sum_{j\in [m]}SC_{\textrm{PMM}}(I_j)$, where $SC_{\textrm{PMM}}(I_j)$ is the SC of the agents whose report is in $I_j$ according to the output of PMM.
    We denote with $\vec y$ the vector containing the locations of the facilities returned by the PMM and define $J\subset [m]$ as the set of indexes $j\in [m]$ such that $y_j\in [x_{(j-1)k+1},x_{jk}]$.
    We notice that $J$ is non-empty since $r\in J$ by definition of the PMM.
    We notice that $SC_{opt}(I_r)=SC_{\textrm{PMM}}(I_r)$.
    Let us now consider $j\in J$ such that $j\neq r$.
    Since $y_j\in [x_{(j-1)k+1},x_{jk}]$, we have that $SC_{\textrm{PMM}}(I_j)\le (k-1)|x_{(j-1)k+1}-x_{jk}|\le (k-1)SC_{opt}(I_j)$.
    Let us now consider $j\notin J$ and, without loss of generality, let us assume that $r<j$ since the other case is symmetric.
    Since $j\notin J$, there exists an index $\ell\in J$ such that $y_j=x_{k\ell}+|y_\ell-x_{k\ell}|$.
    If $\ell\neq r$, we have that $SC_{\textrm{PMM}}(I_j)\le k|y_\ell-x_{k\ell}|\le k SC_{opt}(I_\ell)$ since, for every $x_t\in I_j$, we have $x_{k\ell}\le x_t\le y_j$.
    Similarly, if $\ell=r$, we have that $SC_{\textrm{PMM}}(I_j)\le k|y_r-x_{kr}|$.
    Therefore, it holds 
      \begin{align*}
        SC_{\textrm{PMM}}&(\vec x)\le \sum_{J\ni j\neq r}(k\lambda_j +k-1) SC_{opt}(I_j)+SC_{opt}(I_r)\\
        &\quad\quad+k\gamma_r|y_r-x_{kr}|+k\gamma_l|y_r-x_{k(r-1)+1}|.
    \end{align*}
    %
    Hence $SC_{\textrm{PMM}}\le  \sum_{j\in J,j\neq r}(k\lambda_j +k-1) SC_{opt}(I_j)+(k\Gamma+1) SC_{opt}(I_r)$ where
    \begin{enumerate*}[label=(\roman*)]
     \item $\lambda_j$, for $j\in J$ and $j\neq r$, is the number of $i\in [m]$ such that $i\neq j$ and $y_i=x_{jk}+|y_j-x_{jk}|$ if $j>r$ and the number of $i\in [m]$ such that $i\neq j$ and $y_i=x_{jk}-|y_j-x_{jk}|$ if $j<r$,
     \item $\gamma_l$ is the number of $i\in [m]$ such that $i < r$ and $y_i=x_{k(r-1)+1}-|y_r-x_{k(r-1)+1}|$,
     \item  $\gamma_r$ is the number of $i\in [m]$ such that $i > r$ and $y_i=x_{kr}+|y_r-x_{kr}|$, and
     \item  $\Gamma=\max\{\gamma_r,\gamma_l\}$.
    \end{enumerate*}   
    If we set $K_r=(k\Gamma+1)$ and $K_j=(k\lambda_j +k-1)$ for every $j\in J$ such that $j\neq r$, we have $ar_{SC}(\textrm{PMM})\le \frac{\sum_{j\in J}K_j SC_{opt}(I_j)}{\sum_{j\in J}SC_{opt}(I_j)}$,
    thus $ar_{SC}(\textrm{PMM})\le \max_{j\in J}\{K_j\}$.
    %
    Since, $\lambda_j\le \floor{\frac{m}{2}}-1$ and $\Gamma\le\floor{\frac{m}{2}}$, we have $K_j \le k\floor{\frac{m}{2}}+1$
    for every $j\in J$, therefore $ar_{SC}(\textrm{PMM})\le k\floor{\frac{m}{2}}+1$.

    %
    
    %
    Finally, to prove that $ar(\textrm{PMM})=\big(k\floor{\frac{m}{2}}+1\big)$, consider the following instance: $x_1=\dots=x_{kr-1}=0$ and $x_{kr}=\dots=x_n=1$.
    The optimal cost of this instance is $1$.
    The PMM places the facilities as it follows $y_1=\dots=y_r=0$ and $y_{r+1}=\dots=y_m=2$, thus the Social Cost of the mechanism is $n-(k\floor{\frac{m+1}{2}}-1)=k\floor{\frac{m}{2}}+1$.
\end{proof}

Through a similar argument, we retrieve the approximation ratio of PMM with respect to the MC.


\begin{theorem}
\label{thm:PMM_MC}
    It holds $ar_{MC}(\textrm{PMM})=2$.
\end{theorem}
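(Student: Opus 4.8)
The plan is to follow the same decomposition into clusters used in the proof of Theorem~\ref{thm:PMM_SC}, but to replace the cluster-by-cluster additive accounting (which produces the $\Theta(km)$ bound for the SC) by a single invariant that controls the propagation distances. Recall from the preliminaries that the MC-optimal solution places each facility at the midpoint $\tfrac12(x_{(j-1)k+1}+x_{jk})$ of $I_j$, so that $MC_{opt}(\vec x)=\tfrac12\max_{j\in[m]}\mathrm{diam}(I_j)$, where $\mathrm{diam}(I_j)=x_{jk}-x_{(j-1)k+1}$. It therefore suffices to prove the upper bound $MC_{\textrm{PMM}}(\vec x)\le 2\,MC_{opt}(\vec x)$ for every $\vec x$, and to exhibit one instance that attains the ratio $2$.

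For the upper bound I would first dispatch the central cluster: since $y_r$ is a median of $I_r$, it lies in $[x_{(r-1)k+1},x_{rk}]$, so every agent in $I_r$ costs at most $\mathrm{diam}(I_r)=2\,MC_{opt}(I_r)\le 2\,MC_{opt}(\vec x)$. For the remaining facilities I would run an induction (to the right of $r$; the left side is symmetric) on the propagation distance $d_l=|y_l-x_{kl}|$, with hypothesis $d_l\le 2\,MC_{opt}(\vec x)$. The base case $l=r$ is immediate from $d_r=x_{rk}-y_r\le \mathrm{diam}(I_r)$. For the inductive step I would split on which term realises $y_{l+1}=\max\{x_{kl+1},\,x_{kl}+d_l\}$: if $y_{l+1}=x_{kl+1}$, the facility sits at the left endpoint of $I_{l+1}$, and both the agents' cost and $d_{l+1}$ are bounded by $\mathrm{diam}(I_{l+1})$; if instead $y_{l+1}=x_{kl}+d_l$, then either $y_{l+1}$ falls inside $I_{l+1}$ (again bounded by the diameter), or it overshoots $x_{k(l+1)}$, in which case, using the ordering $x_{kl}\le x_{kl+1}\le x_{k(l+1)}$, both the maximal agent cost $y_{l+1}-x_{kl+1}$ and the next distance $d_{l+1}=y_{l+1}-x_{k(l+1)}$ are at most $d_l$. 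Together these give $d_{l+1}\le\max\{d_l,\mathrm{diam}(I_{l+1})\}\le 2\,MC_{opt}(\vec x)$ and show that every agent in $I_{l+1}$ costs at most $2\,MC_{opt}(\vec x)$, closing the induction and yielding $MC_{\textrm{PMM}}(\vec x)\le 2\,MC_{opt}(\vec x)$.

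For the matching lower bound I would reuse the instance from Theorem~\ref{thm:PMM_SC}: set $x_1=\dots=x_{kr-1}=0$ and $x_{kr}=\dots=x_n=1$. There the only cluster of positive diameter is $I_r$, so $MC_{opt}=\tfrac12$, whereas the PMM places $y_r=0$ (since $\floor{\frac{k+1}{2}}\le k-1$ for $k\ge 2$, the median index falls among the agents at $0$) and propagates the remaining facilities to $2$; hence the agent at position $1$ in $I_r$ incurs cost $1$, giving $MC_{\textrm{PMM}}=1$ and ratio exactly $2$.

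The step I expect to be the main obstacle is the overshoot case $y_{l+1}>x_{k(l+1)}$: a priori the routine could accumulate distance cluster after cluster, which is precisely what makes the SC ratio grow with $m$. The crux is the observation that the monotonicity $x_{kl}\le x_{k(l+1)}$ forces $d_{l+1}\le d_l$ (rather than $d_{l+1}\le d_l+\mathrm{diam}(I_{l+1})$), so the invariant never grows with $m$; this is what ultimately pins the MC ratio to the constant $2$ and is the essential difference from the SC analysis.
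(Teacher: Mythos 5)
Your proof is correct and follows essentially the same route as the paper's: both decompose the MC cluster by cluster, handle in-range facilities via the cluster diameter, and exploit the fact that the propagation distance $d_l$ never grows (the paper phrases this as tracing each out-of-range facility back to the last in-range index $\ell$ with $y_j=x_{k\ell}+|y_\ell-x_{k\ell}|$, which is your induction unrolled), and both use the same two-point instance for tightness. No gaps.
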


\subsection{The Propagating InnerPoint Mechanism}

We now present our second truthful mechanism for the $m$-CFLP with equi-capacitated facilities and no spare capacity, the Propagating InnerPoint Mechanism (PIPM).
The routine of the PIPM is similar to the routine of the PMM, the main difference lies in how it determines the initial facilities.
Indeed, the PMM places a facility at the median of $I_{\floor{\frac{m+1}{2}}}$, while the PIPM places two facilities: one at the maximum value of $I_{\floor{\frac{m}{2}}}$ and one at the minimum value of $I_{\floor{\frac{m}{2}}+1}$.
%

\begin{mechanism}[Propagating InnerPoint Mechanism]
\label{mech_m3}
    Let $n$ be the total number of agents, $m$ be the number of facilities to place, and $k=\frac{n}{m}\in \mathbb{N}$ be the capacity of each facility.
    Let us set $r=\floor{\frac{m}{2}}$.
    The mechanism runs as follows:
    \begin{enumerate*}[label=(\roman*)]
        \item First, we locate the facilities $y_r$ and $y_{r+1}$ at the positions $x_{rk}$ and $x_{rk+1}$, respectively.
        \item To place the other facilities, we run an iterative routine. 
        For $l\ge r+1$, given the position of the $l$-th facility, namely $y_l$, we place the $(l+1)$-th facility at the position $y_{l+1}=\max\{x_{kl+1},x_{kl}+d_l\}$, where $d_l$ is the distance between $y_{l}$ and $x_{kl}$.
        For $l\le r$, we run a similar iterative routine.
        Given the position of the $l$-th facility $y_l$, the $(l-1)$-th facility is placed at $\min\{x_{k(l-1)},x_{k(l-1)+1}-d_l\}$, where $d_l$ is the distance between $y_l$ and $x_{k(l-1)+1}$.
        \item Finally, all the agents in $I_j$ are assigned to $y_j$.
    \end{enumerate*}
    
\end{mechanism}

Due to the similarities between the definition of the PMM and the PIPM, it is possible to adapt the arguments used in the proof of Theorem \ref{thm:PMM_tr}, \ref{thm:PMM_SC}, and \ref{thm:PMM_MC} to this mechanism.
In particular, the PIPM is truthful and achieves a bounded approximation ratio with respect to both the SC and MC.

\begin{theorem}
\label{thm:PIPM_all}
    The PIPM is truthful.
    Moreover, we have that $ar_{SC}(\textrm{PIPM})=k\ceil{\frac{m}{2}}-1$ and $ar_{MC}(\textrm{PIPM})=2$.
\end{theorem}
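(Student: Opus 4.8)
The plan is to prove all three assertions by transporting the corresponding arguments for the PMM (Theorems \ref{thm:PMM_tr}, \ref{thm:PMM_SC}, and \ref{thm:PMM_MC}), since the PIPM coincides with the PMM except for the seeding step: instead of a single central facility at a median of $I_{\floor{(m+1)/2}}$, the PIPM fixes two facilities $y_r=x_{rk}$ and $y_{r+1}=x_{rk+1}$ with $r=\floor{m/2}$, and then runs the identical outward propagation on each side. Consequently, once the two seeds are shown to behave like the single PMM seed, every downstream estimate carries over with the constants recomputed for a two-seed layout.

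For truthfulness I would mirror the proof of Theorem \ref{thm:PMM_tr}, splitting on which side of the central pair $(y_r,y_{r+1})$ the manipulator lies (the two sides being symmetric) and then on whether it belongs to a seed cluster ($I_r$ or $I_{r+1}$) or an outer cluster. The role played in the PMM proof by the invariance of the median is here taken over by a robustness property of the two innermost order statistics $x_{rk},x_{rk+1}$: an agent whose true position lies in the right seed cluster $I_{r+1}$ with $x_i>y_{r+1}=x_{rk+1}$ leaves $x_{rk+1}$, hence $y_{r+1}$, unchanged whenever it reports another value $\ge x_{rk+1}$, so the right-side propagation obeys the same monotone chain as in Theorem \ref{thm:PMM_tr} and the facility it is then assigned to sits at distance at least $|y_{r+1}-x_i|$ from $x_i$, exactly its truthful cost, so the deviation cannot help; a report crossing below $x_{rk+1}$ only weakly decreases every left facility, all of which already lie to the left of $x_i$, so again nothing closer is created. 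The outer-cluster case is analogous and would be deferred to the appendix, as in Theorem \ref{thm:PMM_tr}. I expect truthfulness to be the main obstacle, precisely because the PIPM has two seeds rather than one: the delicate point is to show they are jointly stable, so that a same-side deviation can neither pull the near seed toward the deviator nor shorten the propagation emanating from it.

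For the Social Cost I would reuse the decomposition $SC_{\textrm{PIPM}}(\vec x)=\sum_{j\in[m]}SC_{\textrm{PIPM}}(I_j)$ together with the set $J$ of clusters whose facility lands inside $[x_{(j-1)k+1},x_{jk}]$; both seeds belong to $J$, since $y_r=x_{rk}$ and $y_{r+1}=x_{rk+1}$ are cluster endpoints. As in Theorem \ref{thm:PMM_SC}, for $j\in J$ one has $SC_{\textrm{PIPM}}(I_j)\le (k-1)SC_{opt}(I_j)$, whereas each propagation step off a cluster $I_\ell$ costs at most $k\,|y_\ell-x_{k\ell}|\le k\,SC_{opt}(I_\ell)$. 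The only change from the PMM accounting is that the seeds no longer sit at medians, so the root clusters pay the in-cluster factor $k-1$ instead of the free median term, and that the propagation chain rooted at a seed now spans at most $\ceil{m/2}-1$ outer clusters (the right block has $m-(r+1)=\ceil{m/2}-1$ of them and the left block $r-1=\floor{m/2}-1$). The binding coefficient is therefore that of the right seed cluster $I_{r+1}$, which combines its in-cluster factor $k-1$ with the propagation over its $\ceil{m/2}-1$ outer clusters, giving $k(\ceil{m/2}-1)+(k-1)=k\ceil{m/2}-1$; all other clusters contribute a smaller coefficient, so $ar_{SC}(\textrm{PIPM})\le k\ceil{m/2}-1$. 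Tightness comes from the instance $x_1=\dots=x_{rk+1}=0$ and $x_{rk+2}=\dots=x_n=1$: the optimum costs $1$, both seeds are pinned at $0$, and the right propagation overshoots to $2$, so each of the $\ceil{m/2}-1$ right outer clusters pays $k$ while the seed cluster $I_{r+1}$ pays $k-1$, for a total of $k\ceil{m/2}-1$.

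Finally, for the Maximum Cost I would follow the argument behind Theorem \ref{thm:PMM_MC}. Writing $MC_{opt}(\vec x)=\max_{j}\tfrac12|x_{jk}-x_{(j-1)k+1}|$, every in-cluster facility makes its agents pay at most $\mathrm{diam}(I_j)\le 2\,MC_{opt}(\vec x)$, and a short induction on the propagation shows that the overshoot of an outer facility never exceeds twice the largest cluster radius, so that $ar_{MC}(\textrm{PIPM})\le 2$. The instance $x_1=\dots=x_{rk+1}=0$, $x_{rk+2}=\dots=x_n=1$ attains this ratio, since $MC_{opt}=\tfrac12$ while an agent at $1$ is served by the facility placed at $2$ at cost $1$; together with the lower bound of $2$ for the Maximum Cost established earlier for this framework, this yields $ar_{MC}(\textrm{PIPM})=2$.
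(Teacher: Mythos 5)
Your proposal follows the paper's own proof essentially step for step: truthfulness by transporting the case analysis of Theorem \ref{thm:PMM_tr} with the two innermost order statistics $x_{rk},x_{rk+1}$ playing the role of the median, the Social Cost bound via the same decomposition over the set $J$ with per-cluster coefficients $t_jk+k-1$ maximized at $k\ceil{\frac{m}{2}}-1$, and the Maximum Cost bound via the per-cluster maximum argument of Theorem \ref{thm:PMM_MC}. The only difference is cosmetic: you certify tightness of the SC bound with the two-point instance $x_1=\dots=x_{rk+1}=0$, $x_{rk+2}=\dots=x_n=1$ instead of the paper's four-point instance, and both yield the ratio $k\ceil{\frac{m}{2}}-1$.
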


Albeit $ar_{MC}(\textrm{PMM})=ar_{MC}(\textrm{PIPM})$, the approximation ratios of the two mechanisms with respect to the SC are different.
Indeed, we have that $ar_{SC}(\textrm{PMM})<ar_{SC}(\textrm{PIPM})$ when $m$ is odd and, vice-versa, $ar_{SC}(\textrm{PIPM})<ar_{SC}(\textrm{PMM})$ when $m$ is even.
Finally, it is easy to adapt Example \ref{ex:noGSP} to show that PIPM is not strong GSP.

\subsection{Lower Bounds for the Approximation Ratio}

To conclude the section, we study the lower bounds for the approximation ratio of truthful mechanisms for the $m$-CFLP with equi-capacitated facilities and no spare capacity. 
First, we show that $2$ is the best approximation ratio for any truthful and deterministic mechanism with respect to the MC.

\begin{theorem}
\label{thm:lowerboundMC}
    No deterministic truthful mechanism for the $m$-CFLP with equi-capacitated facilities and no spare capacity can achieve an approximation ratio with respect to the Maximum Cost that is lower than $2$.
\end{theorem}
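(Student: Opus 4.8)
The plan is to reduce the $m$-facility question to a single-facility tension and then exploit truthfulness through one carefully chosen deviation. First I would \emph{isolate} one facility by means of an instance family $X(t)$, parametrised by $t>0$, consisting of $k-1$ \emph{anchor} agents at position $0$, a single \emph{pivot} agent at position $t$, and $m-1$ \emph{far clusters}, each formed by $k$ co-located agents placed at fixed positions $L_1<\dots<L_{m-1}$ chosen enormously large and mutually separated (all gaps $\gg t$). Since each far cluster is co-located and the active agents occupy $\{0,t\}$, we have $\MCo_{opt}(X(t))=t/2$, attained by centring the remaining facility at $t/2$.

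The crux is a rigidity lemma: any mechanism with $ar_{MC}<2$ is forced into a fixed structure on $X(t)$. Its cost must stay below $2\cdot\tfrac{t}{2}=t$, so no agent of a far cluster can be served by a facility lying outside a $t$-neighbourhood of its own cluster; because the clusters are mutually separated by gaps $\gg t$, this forces each far cluster to consume one dedicated facility. As the capacities are tight ($n=mk$) and there are exactly $m$ facilities, precisely one facility, say at $y(t)$, remains to serve all $k$ active agents, and $\MCo(X(t))<t$ yields $\max\{y(t),\,t-y(t)\}<t$, i.e.\ $y(t)\in(0,t)$. With this single-facility constraint I would then derive the contradiction from one deviation. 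Fix $t_1>0$ and set $c:=y(t_1)\in(0,t_1)$. Applying the reduction to $X(c)$ gives $y(c)\in(0,c)$, so the pivot's truthful cost on $X(c)$ equals $c-y(c)>0$. However, if the pivot whose true position is $c$ reports $t_1$ instead, the reported profile becomes exactly $X(t_1)$ (the anchors and far clusters are identical in $X(c)$ and $X(t_1)$), the lone active facility sits at $y(t_1)=c$, and the pivot is served at cost $|c-c|=0$. This strictly profitable misreport contradicts truthfulness, so $\rho<2$ is impossible and $ar_{MC}\ge 2$; matching Theorem~\ref{thm:PMM_MC}, the bound is tight.

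The main obstacle is making the reduction airtight: I must rule out every alternative assignment allowed by the freedom in the matching $\mu$ — in particular that no far cluster is split across two facilities and that no active agent is ever served by a far facility — which is exactly where the tightness $n=mk$ (leaving no spare capacity) and the large separation of the clusters do the work. A secondary point is the boundary case $k=1$: then there are no anchors, each active facility serves a single agent and can be placed exactly on it, so $\MCo=0$ is achievable and the statement is degenerate; the construction therefore addresses the meaningful regime $k\ge 2$, where the anchor at $0$ creates the spread that drives the argument. I would also invoke the left/right symmetry of the line to assume without loss of generality that the pivot lies to the right of the anchors.
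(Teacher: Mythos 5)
Your proof is correct, but it follows a genuinely different route from the paper's. The paper uses a single co-located block: one agent at $0$ and $n-1$ agents at $2$, so that capacity tightness forces some facility to serve the lone agent together with $k-1$ of the agents at $2$; a ratio below $2$ pins that facility inside $[0,2]$, and the contradiction comes from moving the lone agent to $-t$ and letting $t\to\infty$, since truthfulness prevents the facility from following it. You instead park the $m-1$ surplus facilities on far clusters at mutually enormous distances, reduce cleanly to a single facility serving $k-1$ anchors at $0$ plus a pivot at $t$, and close with a ``capture'' deviation: the pivot whose true position is $c=y(t_1)$ reports $t_1$ and lands exactly on the facility, getting cost $0$ versus the strictly positive truthful cost $c-y(c)$. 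Your route costs some extra bookkeeping in the rigidity lemma (ruling out split clusters and cross-assignments, which you correctly charge to $n=mk$ and the large gaps), but it buys an outright contradiction at finite parameters rather than a limiting bound, and it surfaces the degenerate case $k=1$ --- where placing one facility on each reported position is truthful with zero maximum cost, so the theorem genuinely requires $k\ge 2$, a hypothesis the paper's proof also uses but leaves implicit. Both arguments are sound and yield the same tight bound of $2$.
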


\begin{proof}
    Let $k$ be the capacity of $m$ facilities, hence $n=m k$ is the total number of agents.
    Toward a contradiction, let $M$ be a truthful and deterministic mechanism such that $ar_{MC}(M)=2-\delta$ where $\delta>0$.
    Let us consider the following instance: $x_1=0$ and $x_2=\dots=x_n=2$.
    It is easy to see that the optimal MC is $1$.
    Let $y_1$ denote the position at which $M$ places the facility to which agent at $x_1$ and $k-1$ of the agents at $2$ are assigned.
    Since $ar_{MC}(M)<2$, we have that $y_1\in[0,2]$.
    Given $t>0$, let us now consider the instance $x_1'=-t$ and $x_2=\dots=x_n=2$.
    For every $t>0$, the optimal MC of these instances is $\frac{t+2}{2}$.
    Since $M$ is truthful, we have that $y_1'\ge y_1$, thus $ar_{MC}(M)\ge \frac{t}{\frac{t+2}{2}}=\frac{2t}{t+2}$.
    Finally, we notice that the right hand-side of the inequality converges to $2$ as $t\to \infty$, thus we have that $ar(M)>2-\delta$ for every $\delta>0$, which is a contradiction.
\end{proof}

We now move to the lower bound for the Social Cost.

\begin{theorem}
\label{thm:lowerboundSC}
%
No deterministic truthful mechanism for the $m$-CFLP with equi-capacitated facilities and no spare capacity can achieve an approximation ratio with respect to the Social Cost that is lower than $3$ whenever $k>3$.
\end{theorem}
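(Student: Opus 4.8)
The plan is to mirror the structure of the proof of Theorem~\ref{thm:lowerboundMC}: assume, toward a contradiction, that some deterministic truthful mechanism $M$ attains $ar_{SC}(M)=3-\delta$ for some $\delta>0$, use the approximation guarantee to localize the facilities that $M$ returns on a carefully chosen instance, and then exhibit an agent who strictly benefits from a single misreport, contradicting truthfulness. The single-agent monotonicity that powers the $y_1'\ge y_1$ step in Theorem~\ref{thm:lowerboundMC} is the tool I would reuse to turn a ratio-based localization into a strict cost decrease for a deviating agent.

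First I would reduce the analysis to two capacity-saturated clusters. For general $m$, I place $m-2$ groups of $k$ coincident agents at mutually distant locations, far from the remaining $2k$ agents. Any mechanism with a finite approximation ratio must serve each such isolated group at essentially zero cost, since otherwise those agents contribute an unbounded amount to the $\SCo$ while the optimum stays bounded; hence $m-2$ facilities are pinned to the distant groups and all the interesting behaviour is confined to the two remaining facilities serving two active clusters $I_1,I_2$ of $k$ agents each. This effectively turns the statement into a claim about the $2$-facility, capacity-$k$, no-spare-capacity case.

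For the active part I would use two clusters whose optimal assignment places $y_1,y_2$ at their medians, together with a boundary agent $a$ sitting between them. The hypothesis $k>3$ enters here: with $k-1\ge 3$ coincident anchor agents in each cluster, the cluster medians (and hence the optimal matching into $I_1,I_2$) are insensitive to the single report of $a$, so I can both compute the optimal $\SCo$ exactly and, using $ar_{SC}(M)<3$, confine each facility to a small interval around its anchor. Strategyproofness is then applied to $a$: the ratio bound forces $M$ to serve $a$ from a facility strictly farther away than the one $M$ would use if $a$ reported a position across the boundary, so $a$ can cross profitably, contradicting truthfulness.

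The main obstacle is that the $\SCo$ guarantee is only aggregate, so the ratio bound constrains total cost rather than $a$'s individual cost; I must therefore design the instance so that in the optimum essentially a single agent carries the cost, the anchors pay zero, and $M$ is forced to keep the anchors near them. The genuinely delicate point is the no-spare-capacity coupling: when $a$ crosses the boundary it displaces an anchor into the neighbouring cluster, and that displacement can by itself inflate the optimum of the deviated instance, weakening the localization there and diluting the forced factor. A naive single crossing only yields a factor of order $\frac{k-1}{2}$, which is insufficient for small $k$; obtaining the uniform constant $3$ for every $k\ge 4$ will require either clustering the two active groups close enough that the displacement cost is controlled, or chaining two misreports so that the strategyproofness inequalities compose into the factor $3$. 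Making this tuning trigger a strict violation of truthfulness precisely in the regime $ar_{SC}(M)<3$, uniformly in $k\ge 4$ and for arbitrary, possibly non-anonymous, $M$, is the crux of the argument.
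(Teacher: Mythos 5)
Your proposal is an outline rather than a proof, and the decisive step is missing --- you say so yourself when you write that making the tuning trigger a strict violation of truthfulness precisely in the regime $ar_{SC}(M)<3$ is ``the crux of the argument.'' As it stands, your own accounting of what the construction delivers (a factor of order $\frac{k-1}{2}$ from a single crossing) is strictly below $3$ for all $k\le 6$, so the argument does not establish the stated bound in exactly the regime the theorem must cover ($k>3$ includes $k=4,5,6$). The two repairs you gesture at (shrinking the inter-cluster distance, or chaining two misreports) are not carried out, and it is not clear either one closes the gap: the no-spare-capacity coupling you correctly identify means that after the deviation the optimum of the perturbed instance changes, and you give no inequality showing that the composed strategyproofness constraints still force a ratio of $3$. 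The preliminary reduction to two active clusters via $m-2$ distant groups is also fragile as stated, since if the distant groups are coincident points the optimum there is zero and the ratio bound pins nothing until you make the separation quantitative.

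The paper's proof is considerably more elementary and avoids the boundary-crossing idea altogether. It uses the single instance $x_1=\dots=x_{k+1}=0$, $x_{k+2}=\dots=x_n=2$, whose optimal social cost is exactly $2$ (one agent at $0$ is necessarily served together with $k-1$ agents at $2$). The hypothesis $k>3$ enters only to rule out degenerate placements: if no facility sits at $0$ (or none at $2$), the ratio is already at least $k-1\ge 3$. A truthfulness argument then pins the ``mixed'' facility at $2$: if it sat at some $y\in(0,2)$, relocating the shared agent to $y$ could not move the facility, again forcing ratio at least $k-1$. Finally, the perturbation is not a crossing but a move of one $0$-agent to $1-\epsilon$; truthfulness forces that agent to still be served by a facility at $0$, so the mechanism pays $(1-\epsilon)+2=3-\epsilon$ while the optimum is $1+\epsilon$, and letting $\epsilon\to 0$ gives the bound $3$ uniformly in $k>3$. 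If you want to salvage your plan, this is the instance and the perturbation direction (toward the midpoint, not across it) that you need; the $m-2$ extra facilities can simply be absorbed at the two points $0$ and $2$, so no reduction to a two-facility subproblem is required.
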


\begin{proof}
    Given $m$ facilities with capacity $k$, let $n=mk$ be the number of agents.
    Let us consider the following instance: $x_1=\dots=x_{k+1}=0$ and $x_{k+2}=\dots=x_{n}=2$.
    It is easy to see that the optimal Social Cost is $2$.
    We now show that the mechanism must place at least one facility at $0$ and at least one facility at $2$.
    If all the facilities are placed at $0$, the approximation ratio of the mechanism would be higher than $k-1$, which would conclude the proof.
    Similarly, we conclude that not all the facilities are placed at $2$.
    Finally, let us assume that the facility serving the agents placed at $2$ and one of the agents placed at $0$, namely $y$ is such that $y\in(0,2)$.
    Without loss of generality, it suffices to consider the case in which one agent at $0$ shares the facility with agents at $2$, since in all other cases the cost of the mechanism increases.
    In this case, if we move the agent placed at $0$ to $y$, we have that the facility does not change its position (as otherwise an agent placed at $y$ could manipulate by reporting $0$), thus the approximation ratio of the mechanism would be at least equal to $k-1$.
    So the facility that serves both an agent at $0$ and $k-1$ agents at $2$ must be placed at $2$. 

    Let us consider one of the agents placed at $0$ that is not assigned to a facility placed at $2$.
    For every $\epsilon>0$, we have that if the agent was placed at $1-\epsilon$, it would be still assigned to a facility at $0$, as otherwise, it could manipulate by reporting $0$ rather than its real position.
    In this case, the cost of the mechanism is $3-\epsilon$, while the optimal cost is $1+\epsilon$.
    Since this holds for every $\epsilon>0$, the approximation ratio with respect to the SC of the mechanism is greater or equal to $3$.
\end{proof}

Finally, we present a lower bound for the approximation ratio with respect to the SC of deterministic, anonymous, and truthful mechanisms.
We recall that a mechanism M is anonymous if every agent's outcome depends only on its reports, i.e. two agents swapping two different reports causes the mechanism to swap their outcomes.

\begin{theorem}
\label{thm_anonymLB}
    No deterministic, anonymous, and truthful mechanism for the $m$-CFLP with equi-capacitated facilities and no spare capacity can achieve an approximation ratio with respect to the Social Cost that is lower than $\big(\frac{k(m-1)}{2}+1\big)$ if $m$ is odd or lower than $\big(\frac{km}{2}-1\big)$ if $m$ is even. 
    %
\end{theorem}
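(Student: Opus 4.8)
The plan is to show the bound is tight by proving that the very instances that pin down $ar_{SC}(\textrm{PMM})$ and $ar_{SC}(\textrm{PIPM})$ in Theorems~\ref{thm:PMM_SC} and~\ref{thm:PIPM_all} defeat \emph{every} anonymous, truthful, deterministic mechanism, not just these two. Concretely, for odd $m$ I would use the instance $\mathcal{I}$ with $k\frac{m+1}{2}-1$ agents at $0$ and $k\frac{m-1}{2}+1$ agents at $1$, and for even $m$ the instance with $\frac{km}{2}+1$ agents at $0$ and $\frac{km}{2}-1$ agents at $1$. In both cases a median computation gives $\SCo_{opt}(\mathcal{I})=1$, and the smaller (right) block $R$ has cardinality exactly the claimed bound $B$, namely $k\frac{m-1}{2}+1$ (odd) resp. $\frac{km}{2}-1$ (even). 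It therefore suffices to show that any such mechanism incurs $\SCo(\mathcal{I})\ge B$, i.e. $ar_{SC}\ge B/1=B$.

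The first ingredient is anonymity. Exchanging the reports of two agents sharing a point leaves the profile unchanged but, by anonymity, swaps their outcomes; hence two co-located agents must receive equal cost. On $\mathcal{I}$ this forces all left agents to share a common cost $c_0$ and all right agents a common cost $c_1$, so that $\SCo(\mathcal{I})=|L|c_0+|R|c_1$, with $|R|=B$. The second ingredient is a capacity count: since the block sizes are not multiples of $k$, serving the whole left block at cost $0$ consumes $\ceil{|L|/k}$ facilities and leaves strictly fewer than $|R|$ capacity units at distance $0$ from the right block. By the common-cost property this already rules out $c_0=c_1=0$, and a short refinement of the same count shows that the degenerate branch $c_1=0$ forces $c_0=1$ (the squeezed left agents must sit on far facilities, and equal cost then propagates), whence $\SCo(\mathcal{I})=|L|\ge B$. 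Thus the only surviving case is $c_1>0$.

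It then remains to prove $c_1\ge 1$, which yields $\SCo(\mathcal{I})\ge |R|c_1\ge |R|=B$ and finishes the argument. This is the heart of the proof: I must exclude the intermediate configuration in which the right agents are served by a facility strictly inside $(0,1)\cup(1,2)$ with $0<c_1<1$. I would rule this out with a truthfulness/monotonicity argument in the spirit of Theorem~\ref{thm:lowerboundSC}. Letting a single agent cross between the two blocks turns $\mathcal{I}$ into its perfectly balanced version, with $k\frac{m\pm1}{2}$ (odd) resp. $\frac{km}{2}$ (even) agents on each side; on that balanced profile anonymity together with the capacity count pins every facility to $\{0,1\}$. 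Chaining the single-agent truthfulness inequalities between $\mathcal{I}$ and this balanced instance (or, equivalently, a limiting argument placing the boundary agent at $1-\epsilon$) forces any facility serving a right agent to lie at distance at least $1$, i.e. $c_1\ge 1$.

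The main obstacle is precisely this last step: anonymity and the capacity count alone only deliver ``not both costs zero'', whereas an adversarial mechanism could a priori place a facility at $1-c_1\in(0,1)$ to shave the right agents' cost below $1$ while keeping $c_0$ small. Excluding this requires carefully propagating the truthfulness constraints across the block-crossing deviation, and may need an induction on $m$ with the balanced profile as base case. It is also where the parity enters: the split is chosen so that one crossing lands on an exactly balanced profile, which is what produces the $+1$ for odd $m$ and the $-1$ for even $m$. The resulting bound $c_1\ge 1$ is tight, matching the per-agent costs that PMM and PIPM attain on these same instances.
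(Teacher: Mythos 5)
There is a genuine gap, and it sits exactly where you say the ``heart of the proof'' is. After establishing that all agents at $0$ share a cost $c_0$ and all agents at $1$ share a cost $c_1$, you try to prove $c_1\ge 1$ and concede you do not know how to exclude a facility at $1-c_1\in(0,1)$. But $c_1\ge 1$ is neither true in general nor needed. The capacity count you already invoke gives more than ``not both costs are zero'': since the left block size is not a multiple of $k$ and there is no spare capacity, some single facility, at a position $\lambda$, must serve \emph{both} an agent at $0$ and an agent at $1$. The equal-cost property then pins down both quantities simultaneously, $c_0=|\lambda|$ and $c_1=|1-\lambda|$, so $\SCo(\mathcal{I})=|\lambda|\,|L|+|1-\lambda|\,|R|\ge\min\{|L|,|R|\}=B$ by minimizing a linear function over $\lambda$. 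This is precisely the paper's argument; your adversarial facility at $1-c_1$ is harmless because it simultaneously forces $c_0=1-c_1$, whence $\SCo\ge|R|(c_0+c_1)=|R|$. No block-crossing deviation, limiting argument, or induction on $m$ is required, and the parity only enters through the arithmetic of $|L|$ and $|R|$.

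A secondary issue: your derivation of the equal-cost property from anonymity alone is vacuous. Swapping two identical reports changes nothing, so it yields no constraint; a mechanism could in principle assign two co-located agents to facilities at different distances without violating the swap-based definition. The paper closes this by combining anonymity with truthfulness: if agents at $0$ are assigned to facilities at distances $|y|<|y'|$, move one agent assigned to $y$ to report $y$ itself (truthfulness keeps it served at $y$ with cost $0$), and then anonymity lets the agent assigned to $y'$ imitate that report and strictly reduce its cost, contradicting truthfulness. You need some version of this argument, not the identical-swap observation, to get $\SCo(\mathcal{I})=|L|c_0+|R|c_1$ in the first place.
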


\begin{proof}
    Let $M$ be a deterministic, anonymous, and truthful mechanism.
    Let us consider the following instance $x_1=\dots=x_{kr+1}=0$ and $x_{kr+2}=\dots=x_{n}=1$, where $r=\floor{\frac{m}{2}}$.
    The optimal cost of this instance is $1$.
    First, we show that, according to $M$, the locations of the facilities serving the agents at $0$ are all placed at the same distance from $0$.
    Toward a contradiction, let us assume that $M$ places two facilities at two positions, namely $y$ and $y'$, such that $|y-0|\neq |y'-0|$ and that both facilities serve an agent that reported $0$.
    Without loss of generality, let us assume that $|y-0|=|y|<|y'|=|y'-0|$.
    Let us denote with $x_i$ one of the agents who reported $0$ that is assigned to $y$.
    Let us now consider the instance $(y,x_{-i})$.
    Since $M$ is truthful, we must have that the mechanism places a facility at $y$ and that the agent at $y$ is assigned to it, as otherwise, it could misreport by reporting $0$.
    Let us now denote with $x_j$ one of the agents in $0$ that is assigned to $y'$.
    Since $M$ is anonymous, if $x_j$ reports $y$, it is assigned to $y$, which is closer to $0$ than $y'$, which contradicts the truthfulness of $M$.
    In particular, we infer that all the agents placed at $0$ incur the same cost.
    Similarly, all the agents placed at $1$ incur the same cost.
    Since there is no spare capacity, there exists at least one facility that serves an agent placed at $0$ and an agent placed at $1$, let us denote with $\lambda\in\erre$ its position on the line.
    Then, the total cost of the mechanism is $C=|\lambda|\big(k\floor{\frac{m}{2}}+1\big)+|1-\lambda|\big(n-k\floor{\frac{m}{2}}-1\big)$.
    Finally, we notice that $C\ge \big(\frac{k(m-1)}{2}+1\big)$ if $m$ is odd and $C\ge \big(\frac{km}{2}-1\big)$ if $m$ is even, which concludes the proof.
\end{proof}

Since the PMM and the PIPM are anonymous, the lower bound in Theorem \ref{thm_anonymLB} is tight.
Indeed PMM achieves the lower bound for odd $m$, while PIPM does so for even $m$.
Therefore, for the $m$-CFLP with equi-capacitated and no spare capacity, PMM and PIPM are the best anonymous, deterministic, and truthful mechanisms for odd and even $m$, respectively.
%
%


\section{The \texorpdfstring{$2$}{2}-CFLP with abundant facilities}
\label{sec:particular_cases}

%
%
We now consider the case in which we have to place two facilities capable of accommodating half of the agents.
%
%
We present the Extended InnerGap (EIG) mechanism, a truthful mechanism that generalizes and includes mechanisms that operate under further assumptions: the InnerPoint (IM) Mechanism \cite{aziz2020facility}, the InnerGap (IG) Mechanism \cite{ijcai2022p75}, and the InnerChoice (IC) Mechanism \cite{ijcai2022p75} (see Table \ref{tab:my_label}).
%
%
We show that EIG achieves a finite approximation ratio with respect to the SC and the MC and corroborate these results by providing lower bounds on the approximation ratio achievable by truthful and deterministic mechanisms with respect to SC and MC.
As a consequence, we infer the approximation ratio of the IC and IG mechanisms, which, to the best of our knowledge, were previously unknown.

\begin{table}[]
    \centering
    \begin{tabular}{ @{}c c c c@{}}
    \toprule
          & $\forall n\in\mathbb{N}$ & $n<c_1+c_2$ &$c_1\neq c_2$ \\
          
         \midrule

          $EIG$ & Yes & Yes & Yes \\
          $EG$ & Yes & Yes & No \\
          $IC$ & No & No & Yes \\
          $IM$ & No & No & No \\
          \bottomrule
    \end{tabular}
    \caption{Frameworks under which the mechanisms operate when $c_1,c_2\ge \floor{\frac{n}{2}}$. From right to left, the column tell us whether the mechanism is capable of working ($1$) for every number of agents $n$, ($2$) when the total capacity is larger than the number of agents, and ($3$) when the two facilities have different capacities.
    The EIG (Exended InnerGap) Mechanism is the only mechanism capable of working under no further restriction.}
    \label{tab:my_label}
\end{table}
\begin{mechanism}[Extended InnerGap Mechanism]
\label{mech1}
%
Let $\bar c:=\max\{c_1,c_2\}$ and let $\vec x\in\erre^n$ be the vector containing the agents' report ordered from left to right.
%
Let us fix $y_1=x_{n-\bar c}$, $y_2=x_{\bar c +1}$, and $z=\frac{y_1+y_2}{2}$, let $n_1$ be the number of agents in $[y_1,z]\cap \{x_i\}_{i\in [n]}$ and $n_2$ be the number of agents in $(z,y_2]\cap \{x_i\}_{i\in [n]}$.
%
Finally, the output of the EIG over $\vec x$ is
\begin{enumerate*}[label=(\roman*)]
    \item to place the facility with the largest capacity at $y_1$ and the other at $y_2$ if $n_1\ge n_2$; or
    \item to place the facility with the lowest capacity at $y_1$ and the other at $y_2$ if $n_2> n_1$.
\end{enumerate*}
In both cases, every agent is assigned to the facility closer to its report.
\end{mechanism}
%
%

\begin{theorem}
\label{thm:EIG_GSP}
    The EIG is strong GSP, hence truthful.
\end{theorem}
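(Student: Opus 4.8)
The plan is to reduce the claim to a statement about a pure order-statistic location rule and then run a pivotal-agent argument. First I would record the structural facts I intend to use. Since $\bar c\ge\floor{\frac n2}$, the indices satisfy $n-\bar c\le \bar c+1$, so $y_1=x_{n-\bar c}\le x_{\bar c+1}=y_2$, and under the nearest-facility assignment an agent reporting $x_i$ is served by the facility at $y_1$ if $x_i\le z$ and by the one at $y_2$ otherwise; hence its cost equals $\min\{|x_i-y_1|,|x_i-y_2|\}$. The labelling that sends the large/small capacity to $y_1$ versus $y_2$ only certifies feasibility and affects neither the two positions nor the nearest-neighbour partition, so for the purpose of computing costs the EIG coincides with the rule ``place facilities at the order statistics $x_{n-\bar c}$ and $x_{\bar c+1}$ and serve each agent by the nearer one.'' It therefore suffices to prove that this rule is strong GSP, from which truthfulness follows as the special case of a singleton coalition.

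Second, I would isolate the monotonicity/locality behaviour of the two order statistics. Because $y_1$ is the $(n-\bar c)$-th smallest report, it is a non-decreasing function of every coordinate, and, crucially, it can be pushed strictly below its current value only if some agent currently reporting at or to the right of $y_1$ lowers its report below the new value; symmetrically for pushing $y_1$ to the right, and analogously for $y_2=x_{\bar c+1}$. I would phrase this as a ``pivotal crossing'' property: to move a facility to the left, at least one report must cross the facility's position from right to left.

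Third, the coalition argument. Suppose toward a contradiction that a coalition $S$ misreports so that every member's cost weakly decreases and some member's strictly decreases. Since each cost is the distance from the true position to the nearer of the two order statistics, a member can strictly gain only if the statistic serving it moves strictly toward its true position (e.g.\ an agent left of $y_1$ gains only if $y_1$ moves left, an agent in $(z,y_2]$ only if $y_2$ moves left, and so on). Fixing a facility that moves, I would invoke the locality property: the move requires a pivotal report to cross that facility's location, and the pivotal agent's true position then lies on the far side of the new location. Hence either that agent belongs to $S$, in which case its own served facility has moved away and its cost strictly increases, contradicting that all of $S$ weakly gains; or it lies outside $S$, reports truthfully, and so cannot supply the crossing needed to realise the move.

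The main obstacle, which I would treat most carefully, is the coupling between the two facilities: a coalition can attempt to shift $y_1$ and $y_2$ simultaneously, which also moves the midpoint $z$ and thereby changes the endogenous partition of agents into those served by $y_1$ and those served by $y_2$, so an agent might try to gain by being reassigned across $z$ rather than by a facility moving toward it. To close this I would argue that a reassignment from $y_1$ to $y_2$ (or vice versa) lowers an agent's cost only when the new serving facility is strictly closer, which again forces the corresponding order statistic toward the agent and reduces to the pivotal analysis above; and I would dispose of coincident reports (ties, including the degenerate case $y_1=y_2$ in which all costs collapse to the same point) by breaking ties consistently with the left-to-right indexing, so that the pivotal report is always well defined. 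Assembling these cases yields the contradiction and establishes strong group strategyproofness.
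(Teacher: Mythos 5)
Your argument is correct and follows essentially the same route as the paper's proof: both identify a pivotal agent whose report must cross the relevant order statistic ($x_{n-\bar c}$ or $x_{\bar c+1}$) in order for a facility to move toward a strictly gaining coalition member, and both derive the contradiction from the fact that this crossing agent must itself be a misreporting coalition member whose true-position cost strictly increases, while agents outside the coalition cannot supply the crossing. The only cosmetic differences are that the paper additionally assumes the coalition is minimal and separately excludes agents located exactly at $y_1$ or $y_2$, neither of which your pivotal-crossing formulation actually needs.
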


\begin{proof}
%
    Let $\vec x$ be the true positions of the agents.
    We denote with $y_1\le y_2$ the positions of the facilities according to the EIG on the truthful input.
    Let $I:=\{x_{i_1},\dots, x_{i_s}\}$ be the real positions of the agents that form a coalition able to manipulate the output of the EIG.
    Without loss of generality, we assume that $I$ is minimal, that is no subset of the agents in $I$ can collude.
    %
    %
    We recall that the EIG places the two facilities: one at the $(n-\bar c)$-th agents' report from the left, namely $y_1$, and one at the $(\bar c+1)$-th agents' report from the left, namely $y_2$.
    Since $I$ is minimal, none of the agents whose true position coincides with $y_1$ or $y_2$ takes part in the group manipulation.
    Hence, if we denote with $y_1'$ and $y_2'$ the positions of the facilities after the group manipulation, we cannot have that $y_1'< y_1$ and $y_2<y_2'$ at the same time.
    Let us now consider a coalition of agents $I$ that is able to lower the cost of an agent, whose real location is $x_{i_1}$, without increasing the cost of the other agents in $I$.
    Without loss of generality, let us assume that $x_{i_1}< y_1$, hence $y_1'<y_1$.
    If $y_1'<y_1$, it must be the case that at least one agent whose real position, namely $x_t$, was on the right of $y_1$ reports a position on the left of $y_1$, i.e. $x_t'\in I_1'$, where $x_t'$ is the misreport of the agent whose real position was $x_t$.
    If that agent was assigned to $y_2$ in the truthful input, it must be that $|x_t-y_2|\ge|x_t-y_1|>|x_t-y_1'|$, since $y_1'<y_1\le x_t$.
    Thus, the agent at $x_t$ is increasing its cost, which contradicts $x_t\in I$.
    Similarly, if $x_t$ was assigned to $y_1$ according to the truthful input, its cost still increases after the manipulation, which concludes the proof.
\end{proof}

The EIG mechanism determines the facility position using the same routine used by a percentile mechanism, \cite{sui2013analysis}.
However, the percentile mechanisms are not strong GSP in general, while the EIG mechanism is (see Example in the Appendix).
%
%
%
%
This difference is due to the fact that the EIG forces the agents to use a specific facility, while the percentile mechanism does not.
%
%
%
%

\begin{theorem}
\label{thm:EIGsc}
    %
    %
    It holds that $ar_{SC}(EIG)=\max\{(n-\bar c-1),(\frac{\bar c}{n-\bar c}-1)\}$.
    Moreover, it holds that $ar_{MC}(EIG)=2$.
\end{theorem}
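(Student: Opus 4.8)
The plan is to handle the two metrics separately, starting with the Maximum Cost, which is the cleaner half. I would first record that on a line the optimum (for both metrics) assigns the sorted agents to the two facilities as two contiguous blocks split at some index $p$, and that the capacity bounds $\floor{\frac n2}\le c_1,c_2\le n-1$ force $p\in[n-\bar c,\bar c]$. Hence the optimal left facility serves all of $x_1,\dots,x_{n-\bar c}$ and the right one all of $x_{\bar c+1},\dots,x_n$. Since a single facility serving a contiguous block $x_1,\dots,x_q$ has Maximum Cost at least $(x_q-x_1)/2$, this yields $MC_{opt}\ge (x_{n-\bar c}-x_1)/2$ and, symmetrically, $MC_{opt}\ge (x_n-x_{\bar c+1})/2$. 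The EIG places its facilities at $y_1=x_{n-\bar c}$ and $y_2=x_{\bar c+1}$, so the two extreme agents pay exactly $x_{n-\bar c}-x_1$ and $x_n-x_{\bar c+1}$, each at most $2MC_{opt}$. For an intermediate agent $x_i\in[y_1,y_2]$, whose EIG cost is $\min(x_i-y_1,y_2-x_i)$, I would look at the side to which the optimum assigns it: if $i\le p$ then $x_i-y_1\le x_i-x_1\le x_p-x_1\le 2MC_{opt}$, and symmetrically if $i>p$. Thus $MC_{EIG}\le 2MC_{opt}$, and a small padded instance (agents at $0,\tfrac12,1$, forcing the midpoint agent to pay half the gap while the optimum halves it again) gives the matching ratio $2$.

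For the Social Cost, write $a=n-\bar c$ and decompose the EIG cost as $SC_{EIG}=L+R+M$, where $L=\sum_{i\le a}(x_a-x_i)$ is the cost of the left block (all routed to $y_1=x_a$), $R=\sum_{i>\bar c}(x_i-x_{\bar c+1})$ is the right-block cost, and $M=\sum_{\text{middle}}\min(x_i-x_a,x_{\bar c+1}-x_i)$ is the cost of the $n-2a$ middle agents split at $z$. The engine for the first term of the ratio is the elementary inequality that for any sorted block of size $q$ its cost measured at the rightmost point is at most $(q-1)$ times its median cost (extremal when $q-1$ points coincide and one is separated). Since the optimum serves each block entirely by one facility, its cost on that block is at least the block's median cost; applying the inequality to both blocks gives $L+R\le (a-1)\,(\text{optimal cost on the block agents})\le (n-\bar c-1)\,SC_{opt}$, which is exactly the term $n-\bar c-1$.

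The delicate part, and the main obstacle, is controlling $M$ and combining it with $L+R$ so that the two contributions produce the \emph{maximum} of the two ratios rather than their sum: a naive pair of separate bounds only yields $(n-\bar c-1)+(\tfrac{\bar c}{n-\bar c}-1)$. Each middle agent pays at most half the inner gap $d=x_{\bar c+1}-x_a$, so $M\le (n-2a)\,d/2$, and the target is $M\le(\tfrac{\bar c}{n-\bar c}-1)\,SC_{opt}=\tfrac{n-2a}{a}\,SC_{opt}$. I would establish this, together with the correct combination, by a case analysis on where the optimal facilities lie relative to $[x_a,x_{\bar c+1}]$: to serve the middle agents cheaply the optimum must place a facility inside the gap, but then the capacity bound $\bar c<n$ forces at least $a$ block agents on the far side to be served across roughly half the gap, so the optimum already pays $\Omega(a\,d)$, which absorbs $M$ with the factor $\tfrac{n-2a}{a}$; in the complementary case the middle agents are themselves served expensively. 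The crux is that the configurations making $L+R$ tight (lopsided blocks, no middle mass) and those making $M$ tight (flat blocks, middle agents at the midpoint) are mutually exclusive, so the block-loss and the middle-loss are never simultaneously large against the same $SC_{opt}$, collapsing the bound to $\max\{n-\bar c-1,\tfrac{\bar c}{n-\bar c}-1\}$. Finally, the two matching lower bounds come from explicit families: $x_1=\dots=x_{n-\bar c-1}=0$ with $x_{n-\bar c}=\dots=x_n=1$ (where one unit of spare demand must cross, giving ratio $n-\bar c-1$), and one block agent at $0$, one at $1$, with all $n-2$ middle agents at $\tfrac12$ (giving ratio $\tfrac{\bar c}{n-\bar c}-1$).
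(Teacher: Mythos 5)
Your Maximum Cost argument is correct and essentially the paper's: both rest on the observation that the optimal split index lies in $[n-\bar c,\bar c]$, so $x_{n-\bar c}$ and $x_{\bar c+1}$ each belong to the corresponding optimal block, whose optimal radius is half its diameter. That half of the proposal is fine.

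The Social Cost upper bound, however, has a genuine gap at exactly the point you flag as ``the delicate part.'' Your decomposition $SC_{EIG}=L+R+M$ with the blocks cut at the \emph{mechanism's} indices $n-\bar c$ and $\bar c+1$ leads inevitably to two separate charges against $SC_{opt}$, i.e.\ to the \emph{sum} $(n-\bar c-1)+(\frac{\bar c}{n-\bar c}-1)$, and the step that is supposed to collapse this to the maximum is only asserted: the claim that the configurations making $L+R$ tight and those making $M$ tight are ``mutually exclusive'' is a heuristic about worst-case instances, not an inequality, and no case analysis is actually carried out. As written, the argument proves at best the sum, not the stated ratio. The paper avoids this entirely by cutting at the \emph{optimal} partition $I_1,I_2$ instead: since EIG sends each agent to its nearer facility, $SC_{EIG}(\vec x)\le SC_{EIG}(I_1)+SC_{EIG}(I_2)$ where $SC_{EIG}(I_i)$ routes all of $I_i$ to $y_i$; each block is then a one-facility instance in which EIG places the facility at a fixed order statistic of that block, and the per-block ratio is bounded by $\max\{n-\bar c-1,\frac{\bar c}{n-\bar c}-1\}$ directly. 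Because every agent is charged exactly once, the overall ratio is a convex combination of the per-block ratios and the maximum emerges with no combination step at all. If you want to salvage your route, you would need to reassign the middle agents to the optimal blocks before bounding, which is precisely the paper's decomposition.

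A secondary issue: your second lower-bound family (one agent at $0$, one at $1$, and $n-2$ agents at $\tfrac12$) does not witness the ratio $\frac{\bar c}{n-\bar c}-1$ except when $\bar c=n-1$, because for $n-\bar c\ge 2$ both EIG facilities land at $x_{n-\bar c}=x_{\bar c+1}=\tfrac12$ and the ratio collapses. You need blocks of size $n-\bar c$ at the two extremes (e.g.\ $n-\bar c$ agents at $0$, $2\bar c-n$ at $1$, $n-\bar c$ at $2$, as in the paper) so that the EIG facilities are pinned to the outer blocks and the middle agents pay the full half-gap.
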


\begin{proof}
    We prove the statement only for the MC, the study of the SC is deferred to the Appendix.
    Let us denote with $I_i$ the set of agents that are assigned to the facility with capacity $c_i$ according to the optimal solution and, without loss of generality, we assume that all the agents in $I_1$ are placed to the left of the agents in $I_2$.
    The optimal MC is then $\frac{1}{2}\max\{|\min \{I_1\}-\max\{I_1\}|,|\min \{I_2\}-\max\{I_2\}|\}$.
    Let $y_1\le y_2$ be the position at which the mechanism places the two facilities.
    Then the MC of the EIG is lower or equal to the MC of assigning all the agents in $I_i$ to the facility at $y_i$.
    Finally, since $x_{n-\bar c}\in I_1$ and $x_{\bar c + 1}\in I_2$, we infer that
    \begin{align*}
        MC_{EIG}(\vec x)&\le \max\{\max_{x\in I_1}|x-x_{n-\bar c}|, \max_{x\in I_2}|x-x_{\bar c+1}|\}\\
        &\le\max\{|x_1-\max_{x\in I_1}\{x\}|,|\min_{x\in I_2}\{x\}-x_n|\} \\
        &\le 2 MC_{opt}(\vec x),
    \end{align*}
    thus $ar_{MC}(EIG)\le 2$.
    Lastly, let us define $\vec x$ as $x_1=\dots=x_{\bar c +1}=0$, and $x_{\bar c+2}=\dots=x_{n}=1$.
    The optimal cost is $0.5$, while the cost of the EIG mechanism is $1$.  
\end{proof}

We now provide lower bounds on the approximation ratio with respect to both the MC and SC of any truthful and deterministic mechanism for this framework.
Our results show that the EIG is optimal or almost optimal for both costs.
\begin{theorem}
    \label{thm:lowerboundMCnodd2}
    Let $M$ be a truthful and deterministic mechanism that places two facilities with capacity $c_1,c_2\ge \floor{\frac{n}{2}}$, then we have that $ar_{MC}(M)\ge 2$ and $ar_{SC}(M)\ge 3$. 
    If $M$ is also anonymous, then we have that $ar_{SC}(M)\ge (n-\bar c-1)$.
\end{theorem}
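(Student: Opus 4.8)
The plan is to treat the three bounds separately, each via a tailored instance, reusing the overflow-plus-truthfulness scheme behind Theorems \ref{thm:lowerboundMC}, \ref{thm:lowerboundSC}, and \ref{thm_anonymLB}. The common device is an \emph{overloaded cluster}: put $\bar c+1$ agents at $0$ and the remaining $n-\bar c-1$ agents at a point $p>0$. Since every facility has capacity at most $\bar c$, the $\bar c+1$ agents at $0$ cannot fit on a single facility, so they occupy both facilities; consequently at least one facility must serve simultaneously an agent reporting $0$ and an agent reporting $p$. This forced mixing holds for every $c_1,c_2$ in range (it depends only on $\bar c<\bar c+1$, not on the spare capacity $c_1+c_2-n$), and it is the mechanism by which capacity constraints create the tension we exploit.

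For the anonymous Social Cost bound I would mirror Theorem \ref{thm_anonymLB} with $p=1$. Anonymity together with truthfulness forces all facilities that serve a $0$-agent to lie at a common distance $|\lambda|$ from $0$: if one were strictly closer, an agent at the farther facility could report the closer facility's location and, by anonymity, be reassigned there, contradicting truthfulness. The same argument applies to the $1$-agents. Because both facilities serve $0$-agents, writing their positions as $\pm\lambda$ (or both at $+\lambda$), the Social Cost equals $(\bar c+1)|\lambda| + (n-\bar c-1)\,\mathrm{dist}(\lambda,1)$. Since $\bar c\ge\floor{\frac n2}$ we have $\bar c+1>n-\bar c-1$, so this expression is minimised at $\lambda=0$ and is therefore at least $n-\bar c-1$, against an optimum of $1$; this yields $ar_{SC}(M)\ge n-\bar c-1$.

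For $ar_{MC}(M)\ge 2$ and $ar_{SC}(M)\ge 3$ I would reduce to the single-facility arguments of Theorems \ref{thm:lowerboundMC} and \ref{thm:lowerboundSC}. Taking $p=2$, the forced-mixed facility serves a leftmost agent (at $0$) and a rightmost agent (at $2$); if $M$ beat ratio $2$ (resp.\ achieved small $MC$) this facility would have to sit inside $[0,2]$. For the Maximum Cost I would then drive the served leftmost agent to $-t$ and invoke truthfulness, as in Theorem \ref{thm:lowerboundMC}, to show the facility serving it cannot follow it arbitrarily far left without some agent it co-serves paying $\Theta(t)$; letting $t\to\infty$ forces the ratio to $2$. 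For the Social Cost I would first argue, as in Theorem \ref{thm:lowerboundSC}, that the mixed facility must sit at the extreme point $2$ (an interior position lets an agent reporting that point manipulate), so the overflow $0$-agent pays $2$; relocating a $0$-agent assigned to the other facility to $1-\epsilon$ and using truthfulness to keep it matched on the left adds a further $1-\epsilon$, giving Social Cost $\to 3$ against an optimum $\to 1$.

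The main obstacle I anticipate is in the two perturbation arguments rather than in the arithmetic: because the mechanism chooses both the facility positions \emph{and} the agent-to-facility matching, and because here (unlike the no-spare setting of Section \ref{sec:beyond}) there may be spare capacity, I must rule out the shared facility ``escaping'' by re-matching — for instance following the displaced leftmost agent while offloading the distant agents onto the other facility. The key is that the saturation condition $c_1+c_2\ge n$ together with $c_i\le n-1$ keeps the overflow agent glued to a facility that also serves a far agent even after the perturbation, so that truthfulness applied to that agent still yields the claimed bound. Making this robustness precise across the admissible matchings is where the real work lies.
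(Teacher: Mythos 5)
Your proposal follows essentially the same route as the paper: the same overloaded-cluster instances ($\bar c+1$ agents at $0$, the rest at $p$), the same anonymity-plus-truthfulness argument from Theorem \ref{thm_anonymLB} yielding a cost of the form $(\bar c+1)|\lambda|+(n-\bar c-1)|1-\lambda|\ge n-\bar c-1$ against an optimum of $1$, and the same reduction of the $ar_{MC}\ge 2$ and $ar_{SC}\ge 3$ bounds to the perturbation arguments of Theorems \ref{thm:lowerboundMC} and \ref{thm:lowerboundSC}. The re-matching difficulty you flag (the shared facility escaping via the spare capacity) is genuine, but it is left equally implicit in the paper's own appendix proof, which simply invokes those earlier arguments on the analogous two-point instance.
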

\begin{proof}
We prove only the lower bound with respect to the SC for truthful, deterministic, and anonymous mechanisms.
The proof for the other two cases, follow an argument similar to the ones used in the proof of Theorem \ref{thm:lowerboundMC} and \ref{thm:lowerboundSC} and are reported in the Appendix.
Let us consider the following instance: $x_1=\dots=x_{\bar c+1}=0$ and $x_{\bar c+2}=\dots=x_{n}=1$.
%
%
By the same argument used in Theorem \ref{thm_anonymLB}, any truthful, deterministic, and anonymous mechanism places the two facilities at the same distance from $0$.
Since we have that $\bar c\ge \floor{\frac{n}{2}}$, we have that $\bar c+1\ge n-\bar c-1$, hence we get that the approximation ratio of any truthful, anonymous, and deterministic mechanism is larger than $(n-\bar c-1)$.
\end{proof}
%
In particular, the EIG is the best truthful, anonymous, and deterministic mechanism whenever $n\ge \bar c+\sqrt{\bar c}$.
%

\subsection{The EIG and previous mechanisms}

To conclude, we show that the EIG mechanisms extends and includes three already-known mechanisms.
In particular, 
\begin{enumerate*}[label=(\roman*)]
    \item when $n$ is an even number and $c_1=c_2=\frac{n}{2}$, the EIG mechanism coincides with the InnerPoint Mechanism, presented in \cite{aziz2020facility}.
    \item When $n=2k+1$ is odd, $c_1=k+1$, and $c_2=k$,  the EIG mechanism coincides with the InnerChoice Mechanism, presented in \cite{ijcai2022p75}.
    \item When $c_1=c_2$, the EIG mechanism coincides with the InnerGap Mechanism, presented in \cite{ijcai2022p75}.
\end{enumerate*}

For the sake of argument, we limit our discussion to the InnerChoice (IC) mechanism, and defer the other two cases to the Appendix.
Given an odd number $n=2k+1$ and two facilities whose capacities are $c_1=k+1$ and $c_2=k$, the routine of the IC mechanism is as follows:
\begin{enumerate*}[label=(\roman*)]
    \item Given $\vec x = (x_1, \dots, x_n)$ the vector containing the agents' reports ordered from left to right, i.e. $x_i\le x_{i+1}$, we define $\delta_1=|x_{k+1}-x_k|$ and $\delta_2=|x_{k+2}-x_{k+1}|$.     
    \item If $\delta_1\le \delta_2$, we locate the facility with capacity $c_1$ at $x_{k}$ and the other one at $x_{k+2}$. Otherwise, we locate the facility with capacity $c_1$ at $x_{k+2}$ and the other one at $x_{k}$.
    \item Lastly, every agent is assigned to its closest facility.
\end{enumerate*}
Since $\bar c=k+1$, we have that $x_{n-\bar c}=x_{k}$ and $x_{\bar c+1}=x_{k+2}$, hence, for every $\vec x\in\erre^n$, the output of EIG and IC are the same, thus the two mechanisms do coincide.
It was shown in \cite{ijcai2022p75} that the IC is truthful, however, owing to Theorem \ref{thm:EIG_GSP}, we have that IC is strong GSP.

\begin{theorem}
\label{thm:GSP_IC}
    The IC is strong Group Strategyproof.
\end{theorem}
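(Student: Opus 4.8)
The plan is to obtain the strong group strategyproofness of IC as an immediate consequence of Theorem \ref{thm:EIG_GSP} together with the coincidence of the two mechanisms established immediately above. The conceptual engine is that strong GSP is a property of the mechanism's input--output map alone: it asserts that for no profile $\vec x$ and no coalition of agents does there exist a joint misreport that weakly lowers every coalition member's cost while strictly lowering at least one. Since each agent's cost $c_i(x_i, f(\vec x))$ is computed only from the agent's true position, the output facility location on the \emph{reported} profile, and the induced matching, any two mechanisms that return identical full outputs on every reported profile are indistinguishable from the point of view of any coalition's manipulation incentives.

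First I would recall the identity derived just before the statement: for $n = 2k+1$, $c_1 = k+1$, and $c_2 = k$ we have $\bar c = k+1$, so $x_{n-\bar c} = x_k$ and $x_{\bar c + 1} = x_{k+2}$, and hence EIG places its facilities at exactly the positions $x_k$ and $x_{k+2}$ used by IC; moreover the capacity--to--position allocation and the closest--facility matching agree, so that EIG and IC produce the same facility location --- positions, capacity permutation, and matching --- for every $\vec x \in \erre^n$. I would then invoke this coincidence as the single substantive ingredient, treating it as given since it is established in the preceding paragraph of the excerpt.

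Second, with the two input--output maps identified, I would apply Theorem \ref{thm:EIG_GSP}: any coalition deviation against IC is, verbatim, a coalition deviation against EIG yielding the same post--manipulation costs for all agents involved, so the nonexistence of a profitable coalition deviation for EIG transfers directly to IC, giving the claim. There is essentially no analytic obstacle here; the only point demanding care is ensuring the equivalence holds at the level of the \emph{complete} output rather than merely the facility positions --- in particular that EIG's capacity rule ($n_1 \ge n_2$ versus $n_2 > n_1$) and IC's rule ($\delta_1 \le \delta_2$ versus $\delta_1 > \delta_2$) induce the same assignment, since a mismatch there could alter an agent's cost and thereby invalidate the transfer of the manipulation analysis.
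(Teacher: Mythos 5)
Your proposal is correct and follows exactly the paper's own argument: the paper proves this theorem in one line by noting that IC coincides with EIG in this setting and invoking Theorem \ref{thm:EIG_GSP}. Your additional remark about verifying that the capacity-assignment rules ($n_1 \ge n_2$ versus $\delta_1 \le \delta_2$) induce the same output is a reasonable point of care, but it concerns the coincidence claim established in the preceding paragraph, which both you and the paper take as given here.
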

Similarly, we extend the results on the approximation ratio of the IC mechanism with respect to the SC and MC.

\begin{theorem}
\label{thm:IMboundsSC}
    Let $n$ be an odd number, then $ar_{MC}(IC)=2$.
    Moreover, if $n>5$, it holds $ar_{SC}(IC)=k-1=\frac{n-3}{2}$, otherwise $ar_{SC}(IC)=1$.
\end{theorem}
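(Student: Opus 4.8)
The plan is to leverage the equivalence between the IC and EIG mechanisms that was just established immediately above this statement: since IC and EIG return identical outputs when $n=2k+1$, $c_1=k+1$, and $c_2=k$, both approximation ratios follow as direct specializations of Theorem \ref{thm:EIGsc}. The only genuinely new ingredient is the arithmetic of plugging the parameters of this framework, namely $\bar c=k+1$ and $n-\bar c=k$, into the general formulas proven there. I would open the proof by invoking the equivalence so that $ar_{SC}(IC)=ar_{SC}(EIG)$ and $ar_{MC}(IC)=ar_{MC}(EIG)$.

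For the Maximum Cost the conclusion is immediate: Theorem \ref{thm:EIGsc} asserts $ar_{MC}(EIG)=2$ for every admissible capacity profile, so $ar_{MC}(IC)=2$ with no further computation. For the Social Cost I would substitute $\bar c=k+1$ and $n-\bar c=k$ into the expression $ar_{SC}(EIG)=\max\{(n-\bar c-1),(\frac{\bar c}{n-\bar c}-1)\}$. This yields $n-\bar c-1=k-1$ and $\frac{\bar c}{n-\bar c}-1=\frac{k+1}{k}-1=\frac{1}{k}$, so that
\[
ar_{SC}(IC)=\max\Big\{k-1,\tfrac{1}{k}\Big\}.
\]

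The remaining step is an elementary case analysis on $k$ (equivalently on $n$). When $k\ge 3$, that is $n>5$, the linear term satisfies $k-1\ge 2$ while $\frac{1}{k}\le\frac{1}{3}$, so the maximum is $k-1=\frac{n-3}{2}$; when $k\in\{1,2\}$, that is $n\in\{3,5\}$, one checks directly that $\max\{k-1,\frac{1}{k}\}$ equals $\max\{0,1\}=1$ and $\max\{1,\tfrac12\}=1$ respectively, giving $ar_{SC}(IC)=1$. The only subtlety worth flagging is that the stated threshold $n>5$ is precisely the point at which the linear term $k-1$ overtakes the reciprocal term $\frac{1}{k}$, so the two regimes in the theorem statement partition exactly along this crossover.

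Honestly, there is no real obstacle here: the entire substance of the result already lives in Theorem \ref{thm:EIGsc} and in the IC\,$=$\,EIG identity, so this statement is effectively a corollary obtained by a single substitution followed by a trivial comparison of two positive quantities. The proof is therefore short, and the main care required is merely bookkeeping the parameters correctly and verifying the boundary cases $n=3,5$ by hand.
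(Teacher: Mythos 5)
Your proof is correct, and it takes a genuinely different route from the paper's. You treat the theorem as a corollary of Theorem \ref{thm:EIGsc} via the IC\,$=$\,EIG identity and then do the substitution $\bar c=k+1$, $n-\bar c=k$, obtaining $\max\{k-1,\tfrac1k\}$ and splitting on $k\ge 3$ versus $k\in\{1,2\}$; the arithmetic and the crossover at $n>5$ are all right. The paper instead gives a direct, self-contained argument: for the MC it performs a case analysis on whether the partition chosen by IC agrees with the optimal partition (using the inequality $|x_{k+1}-x_k|\le|x_{k+1}-x_{k+2}|$ forced by the assignment rule) and exhibits an explicit tight instance; for the SC with $n>5$ it decomposes the cost into two $1$-FLP subproblems with ratios $\tfrac{k-1}{2}$ and $k-1$; and for $n\in\{3,5\}$ it proves exact optimality by showing that the $\delta_1\le\delta_2$ tie-breaking of IC reproduces the optimal capacity assignment. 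Your approach is shorter and makes transparent that the content is already in Theorem \ref{thm:EIGsc} — indeed the paper itself uses exactly this corollary-style argument for Theorem \ref{thm:GSP_IC} and for the IG and IM results in the appendix, so the method is squarely within its framework. What the paper's direct proof buys is independence from Theorem \ref{thm:EIGsc} and explicit tight instances tailored to IC; this matters at the margins, since the MC lower-bound instance in the proof of Theorem \ref{thm:EIGsc} degenerates when $n=\bar c+1$ (i.e.\ $n=3$ here), so your derivation silently inherits that small gap for the tightness of $ar_{MC}(IC)=2$ at $n=3$, whereas the paper's own instance $x_1=\dots=x_k=0$, $x_{k+1}=\tfrac13+\epsilon$, $x_{k+2}=\tfrac23$, $x_{k+3}=\dots=x_n=1$ handles it directly.
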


Since $n\ge k+1+\sqrt{k+1}$, the IC is the optimal truthful, deterministic, and anonymous mechanisms to place two facilities of capacities $k+1$ and $k$ amongst $n=2k+1$ agents.
Moreover, the IC is also optimal with respect to the MC.

\begin{theorem}
    \label{thm:lowerboundMCnodd}
    Given $k\in\mathbb{N}$, let $n=2k+1>1$. 
    Then, every truthful deterministic mechanism $M$ that places two facilities with capacity $k+1$ and $k$ is such that $ar_{MC}(M)\ge 2$.
    Moreover, if $k>2$, $ar_{SC}(M)\ge 3$.
    Lastly, if $M$ is also anonymous, then $ar_{SC}(M)\ge k-1$.
\end{theorem}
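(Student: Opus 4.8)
The plan is to treat this statement as the specialization of Theorem~\ref{thm:lowerboundMCnodd2} to the case $n=2k+1$, $c_1=k+1$, $c_2=k$. Here $\bar c=\max\{c_1,c_2\}=k+1$ and both capacities satisfy $c_1,c_2\ge \floor{\frac{n}{2}}=k$, so the hypotheses of Theorem~\ref{thm:lowerboundMCnodd2} hold. Consequently the bounds $ar_{MC}(M)\ge 2$ and $ar_{SC}(M)\ge 3$ transfer immediately, and for anonymous $M$ the general bound $n-\bar c-1$ specializes to $(2k+1)-(k+1)-1=k-1$, which is exactly the claimed $ar_{SC}(M)\ge k-1$. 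Thus the entire statement follows once the hypotheses are checked and the arithmetic is carried out; below I sketch the self-contained instances in case an explicit derivation is preferred.

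For the MC bound I would reuse the construction of Theorem~\ref{thm:lowerboundMC}: place one agent at $0$ and the remaining $n-1$ agents at $2$, so the optimal MC equals $1$. Letting $y_1$ be the position of the facility serving the outlier, the assumption $ar_{MC}(M)<2$ forces $y_1\in[0,2]$. I would then move the outlier to $-t$ and invoke truthfulness to conclude that its serving facility cannot move left, so the resulting cost is at least $t$ against an optimal MC of $\frac{t+2}{2}$; letting $t\to\infty$ drives the ratio to $2$.

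For the SC bound $ar_{SC}(M)\ge 3$ I would use the instance with $k+2$ agents at $0$ and $k-1$ agents at $2$ (note $k+2>\bar c=k+1$, so no single facility can absorb the whole left cluster, forcing one facility to serve at least one agent at $0$ together with the agents at $2$). Following the template of Theorem~\ref{thm:lowerboundSC}, I would first argue via truthfulness that this shared facility must sit at $2$ (otherwise an agent reporting its served position could profitably deviate), and then relocate one of the purely-left agents to $1-\epsilon$, which by truthfulness keeps it on a facility at $0$; the mechanism then pays $(3-\epsilon)$ while the optimum pays $1+\epsilon$, giving ratio $3$ in the limit. The hypothesis $k>2$ is genuinely needed here: for $k=2$ (i.e.\ $n=5$) Theorem~\ref{thm:IMboundsSC} shows $ar_{SC}(IC)=1$, so no lower bound above $1$ can hold, and indeed the forcing step breaks down because the left cluster is too small to both pin the shared facility and still leave a movable purely-left agent.

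Finally, for the anonymous SC bound I would take the instance with $\bar c+1=k+2$ agents at $0$ and $k-1$ agents at $1$. As in Theorem~\ref{thm_anonymLB}, anonymity forces the two facilities to be equidistant from $0$, and since $c_1+c_2=n$ leaves no spare capacity, one facility must serve agents from both clusters. Minimizing the social cost over all such equidistant placements gives exactly $k-1$ (attained with both facilities at $0$), whereas the unconstrained optimum is $1$, so the ratio is at least $k-1$. The main obstacle is the SC~$\ge 3$ argument: correctly pinning the shared facility at $2$ under truthfulness while accounting for the asymmetric capacities $k+1$ and $k$, and identifying precisely why the construction---and hence the bound---fails at the threshold $k=2$.
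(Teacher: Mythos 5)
Your proposal is correct and matches the paper exactly: the paper's own proof of this theorem is a one-line deduction from Theorem~\ref{thm:lowerboundMCnodd2} with $\bar c=k+1$, so that $n-\bar c-1=k-1$, and your supporting instance constructions mirror those used in the proofs of Theorems~\ref{thm:lowerboundMC}, \ref{thm:lowerboundSC}, and \ref{thm_anonymLB} that Theorem~\ref{thm:lowerboundMCnodd2} itself relies on. If anything, your explicit MC construction is an improvement, since the paper's appendix proofs of Theorems~\ref{thm:lowerboundMCnodd} and \ref{thm:lowerboundMCnodd2} cite each other for that bound.
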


Lastly, we notice that the only other mechanism known that is not extended by the EIG is the Extended Endpoint Mechanism (EEM).
However, the approximation ratio of EEM is equal to $\frac{3n}{2}$, which is larger than the one attained by the EIG, making it suboptimal \cite{aziz2020facility}.

\section{Conclusion and Future Works}

In this paper, we investigated two frameworks for the $m$-CFLP from a Mechanism Design perspective.
First, we considered the $m$-CFLP with equi-capacitated facilities and no spare capacity.
We propose two truthful mechanisms: the Propagating Median Mechanism (PMM) and the Propagating InnerPoint Mechanism (PIPM).
Both the mechanisms have bounded approximation ratios with respect to the Social and Maximum Costs.
We then established lower bounds on the approximation ratio of any truthful and deterministic mechanism for the $m$-CFLP with equi-capacitated facilities and no spare capacity. 
Notably, both PMM and PIPM achieved optimal approximation ratios for the Maximum Cost.
Additionally, we demonstrated that PMM and PIPM achieve the minimum possible approximation ratio for the Social Cost among truthful, deterministic, and anonymous mechanisms.
In the second framework, we considered the case in which we have two facilities to place and both facilities can accommodate half of the agents.
We proposed the Extended InnerGap mechanism, which is strong Group Strategyproof, achieves finite approximation ratio, is optimal with respect to the MC and almost optimal with respect to the SC.
In future research avenues, we aim to improve the lower bounds for non-anonymous mechanisms concerning the Social Cost, to explore higher-dimensional scenarios for agent placements, and to adapt existing randomized mechanisms to enhance approximation ratios results for this problem class \cite{procaccia2013approximate}.




\section*{Acknowledgments}
Zihe Wang was partially supported by the National Natural Science Foundation of China (Grant No. 62172422). Jie Zhang was partially supported by a Leverhulme Trust Research Project Grant (2021 -- 2024) and the EPSRC grant (EP/W014912/1).



\bibliographystyle{named}
\bibliography{ijcai24}

\clearpage
\section*{Appendix}

In this section, we report the missing proofs and the missing examples.

\subsection*{Missing Proofs}

\begin{proof}[Proof of Theorem \ref{thm:PMM_tr}]
    To conclude the proof, we need to consider the case in which $x_i\in I_j$, with $j>r$.
    We have two cases to analyze, depending on whether $y_j=x_{k(j-1)+1}$ or $y_j=x_{k\ell}+|y_{\ell}-x_{k\ell}|$, for an index $\ell<j$.
    If $y_j=x_{k(j-1)+1}$, we have that the facility is placed on the leftmost agent of $I_{j}$, thus this case is analogous to the case in which $x_i\in I_{r}$.
    Let us then consider the case in which $y_j=x_{k\ell}+|y_{\ell}-x_{k\ell}|$ for $\ell<j$.
    We break this case into two subcases: \begin{enumerate*}[label=(\roman*)]
        \item the manipulating agent is assigned to the $j$-th facility after manipulating and 
        \item the manipulating agent is assigned to another facility after it manipulates.
    \end{enumerate*}
    Let us consider the first subcase.
    %
    By definition of PMM, there is only one case in which a single agent in $I_j$ can manipulate the position of $y_j$ while still being assigned to the $j$-th facility: the position of the manipulative agent is the leftmost position in $I_j$ and all the other positions in $I_j$ are on the right of $y_j$.
    In this case, however, the manipulative agent can only move the $j$-th facility further to the right, which increases its cost.
    Let us then consider the second subcase: the agent at $x_i$ manipulates in such a way that it is assigned to the $\ell'$-th facility, i.e. $y_{\ell'}$.
    Without loss of generality, let us assume that $\ell'<j$.
    If $\ell'\le r$, we have that $y_{\ell'}'\le y_r'\le y_r\le y_\ell$, thus the cost of the agent does not decrease since we have $|x_i-y_{\ell '}|\ge |x_i-y_r|\ge |x_i-y_j|$.
    If $r<\ell'\le \ell$, we have $x_t'=x_t$ for every $t\le k(\ell'-1)$.
    Since $y_{\ell '}'=\max\{x_{k(\ell'-1)+1}',x_{k(\ell'-1)}'+d_{(\ell'-1)}\}=\max\{x_{k(\ell'-1)+1}',x_{k(\ell'-1)}+d_{(\ell'-1)}\}$ and $x_{k(\ell-1)+1}'\le x_{k(\ell-1)+1}$, we have $y_{\ell'}'\le y_{\ell'}\le y_{\ell}$, thus
    $|x_i-y_{\ell'}'|\ge |x_i-y_{\ell}| \ge |x_i-y_j|$.
    Finally, if $\ell<\ell'<j$, we have  $x_t'=x_t$ for every $t\le k(\ell'-1)$, thus $y_{\ell'}=y_j=x_{k\ell}+|y_{\ell}-x_{k\ell}|=x_{k\ell}'+|y_{\ell}'-x_{k\ell}'|=y_{\ell'}'$ so the agent's cost is unchanged.
\end{proof}

\begin{proof}[Proof of Theorem \ref{thm:PMM_MC}]
    First, we rewrite the optimal and the mechanism MC as the maximum of $m$ different costs.
    Indeed, we have $MC_{opt}(\vec x)=\max_{j\in[m]}\big\{MC_{opt}(I_j)\big\}$ and $MC_{\textrm{PMM}}(\vec x)=\max_{j\in[m]}\big\{MC_{\textrm{PMM}}(I_j)\big\}$, respectively, where $MC_{opt}(I_j)$ is the MC of the agents whose report is in $I_j$ according to the optimal solution and $MC_{\textrm{PMM}}(I_j)$ is the MC of the agents whose report is in $I_j$ according to the output of the PMM.
    Denoted with $\vec y$ the output of the PMM, we define $J$ as the set of indexes such that $y_j\in [x_{(j-1)k+1},x_{jk}]$.
    By definition of PMM, $r\in J$, thus $J$ is non-empty.
    If $j\in J$, using the same argument used to prove Theorem \ref{thm:PMM_SC}, we retrieve $MC_{\textrm{PMM}}(I_j)\le 2 MC_{opt}(I_j)$.
    If $j\notin J$ and $j>r$, there exists an index $\ell\in J$ such that $r\le\ell<j$ and $y_j=x_{k\ell}+|y_\ell-x_{k\ell}|$.
    For every  $x_t\in I_j$, we have $x_{k\ell}\le x_t$ and $|x_t-y_j|\le |x_{k\ell}-y_j|=|y_\ell-x_{k\ell}|\le MC_{\textrm{PMM}}(I_{r})$.
    Similarly, if $j\notin J$ and $j<r$, we conclude that there exists $j<\ell\le r$ such that $|x_t-y_j|\le MC_{\textrm{PMM}}(I_{\ell})$, for every $x_t\in I_j$.
    Then, we have $\max_{j\in[m]}MC_{\textrm{PMM}}(I_j)\le2\max_{j\in J} MC_{opt}(I_j)$ and $\max_{j\in J} MC_{opt}(I_j) \le\max_{j\in [m]} MC_{opt}(I_j) $, thus $ar_{MC}(PMM)\le 2$.
    Hence the approximation ratio with respect to the MC is less than $2$.
    To prove that this bound is tight, consider the instance used in the proof of Theorem \ref{thm:PMM_SC}.
\end{proof}

\begin{proof}[Proof of Theorem \ref{thm:PIPM_all}]
We divide the proof into three pieces: in the first one we show that PIPM is truthful, in the second one we compute the approximation ratio of PIPM with respect to the Social Cost, and in the third one we compute the approximation ratio of PIPM with respect to the Maximum Cost.
\textbf{PIPM is truthful.}
    Toward a contradiction, let $x_i$ be the real position of an agent able to manipulate.
    We denote with $x_i'$ the position that the agent uses to manipulate the mechanism.
    We denote with $y_j$ the position of the facilities returned by the PIPM on the truthful input and with $y_j'$ the positions of the facilities returned by the PIPM when $x_i$ reports $x_i'$.
    Notice that the output of the PIPM is such that $y_1\le y_2\le\dots\le y_m$.
    Without loss of generality, we assume that $x_i\ge y_{r+1}=x_{r k+1}$, since the other case is symmetric.
    Finally, we recall that $I_j=\{x_{k(j-1)+1},x_{k(j-1)+2},\dots,x_{k(j-1)+k}\}$ and that, according to the PIPM, every agent in $I_j$ is assigned to the facility $j$-th facility.
    First, we show that if the agent at $x_i$ is able to manipulate, then $x_i\notin I_{r+1}$.
    Toward a contradiction, let us assume that $x_i\in I_{r+1}$.
    Then, if $x_i=y_{r+1}$, the cost of the agent is null, thus it cannot benefit by misreporting.
    Thus, it must be that $x_i>y_{r+1}$.
    If $y_{r+1}<x_i'<x_i$, the output of the mechanism does not change.
    If $x_i'<y_{r+1}$, we have that $y_r'\le y_{r+1}'<y_{r+1}$ and, since $x_i'\in I_\ell$ with $\ell\le r+1$, we have that the facility to which the manipulating agent is assigned is further to the left than $y_{r+1}$ so its costs is increased after the manipulation.
    %
    Finally, let us consider the case $x_i'>x_i$.
    In this case we have that $y'_{r+1}=y_{r+1}$, since the $(rk+1)$-th report from the left is the same regardless of whether the manipulating agent reports truthfully or not.
    Thus, if $x_i'\in I_{r+1}'$, it will still be assigned to $y_{r+1}'=y_{r+1}$, which brings no benefit to the manipulative agent.
    If $x_i'\notin I_{r+1}'$, then $x_i'\in I_\ell'$, where $\ell>r+1$, thus the manipulating agent is assigned to $y_{\ell}'\ge y_{r+2}'$, since $\ell>r+1$.
    Let us denote with $x_{(r+1)k}'$ the position of the $((r+1)k)$-th agent from the left in the manipulated instance $(x_i',x_{-i})$.
    Notice that, since $x_i'>x_i$, we have $x_{(r+1)k}'\ge x_{(r+1)k}\ge x_i$.
    By definition, we have that $y_{r+2}'=\max\{x_{(r+1)k+1}', x_{(r+1)k}'+|y_{r+1}'-x_{(r+1)k}'|\}\ge x_{(r+1)k}'+|y_{r+1}'-x_{(r+1)k}'| > x_i+|y_{r+1}-x_{(r+1)k}'| \ge x_i+|y_{r+1}-x_{(r+1)k}|$, hence the cost of being assigned to $y_\ell' $ is always greater or equal to the cost of being assigned to $y_{r+1}$.
    We then conclude that $x_i\notin I_{r+1}$.
    Lastly, let us consider the case in which $x_i\in I_j$, with $j>r+1$.
    We have two cases to analyze, depending on whether $y_j=x_{k(j-1)+1}$ or $y_j=x_{kr'}+|y_{r'}-x_{kr'}|$, with $r'<j$.
    If $y_j=x_{k(j-1)+1}$, we have that the facility is placed at the position of the leftmost agent in $I_{j}$, thus this case is analogous to the case in which $x_i\in I_{r+1}$.
    Finally, let us consider the case in which $y_j=x_{kr'}+|y_{r'}-x_{kr'}|$.
    We break this case into two subcases: \begin{enumerate*}[label=(\roman*)]
        \item the manipulative agent is still assigned to the $j$-th facility after manipulating, and
        \item the manipulative agent is assigned to another facility after it manipulates.
    \end{enumerate*}  
    Let us consider the first scenario.
    By definition of PIPM, there is only one case in which a single agent in $I_j$ can manipulate the position of $y_j$ while still being assigned to the $j$-th facility: the position of the manipulative agent is the leftmost position in $I_j$ and all the other agents in $I_j$ are on the right of $y_j$.
    In this case, however, manipulative agent can only move the $j$-th facility further to the right, which increases its cost.
    Let us then consider the second possible case: the manipulative agent reports in such a way that it is assigned to the $\ell$-th facility, i.e. $y_\ell'$.
    Without loss of generality, let us assume that $\ell<j$.
    Since the routine that determines the position of the facilities of the PIPM is the same used by the PMM, we can adapt the argument used to prove Theorem \ref{thm:PMM_SC} to infer that the agent is unable to lower its cost also in this case.
    %



\textbf{The approximation ratio of the PIPM with respect to the Social Cost.}
    For the sake of simplicity, let us consider the case in which $m$ is even, so that $r=\frac{m}{2}$ is an integer.
    The case in which $m$ is odd is similar.
    Notice that, since $n=k m$, $n$ is also even.
    %
    %
    Given instance $\vec x$, the optimal Social Cost $SC_{opt}(\vec x)$ and the cost of the mechanism $SC_{\textrm{PIPM}}(\vec x)$ is the sum of $m$ smaller costs, i.e. we have $SC_{opt}(\vec x)=\sum_{j=1}^m SC_{opt}(I_j)$ and $SC_{\textrm{PIPM}}(\vec x)=\sum_{j=1}^m SC_{\textrm{PIPM}}(I_j)$, where $SC_{opt}(I_j)$ is the Social Cost of the agents whose position is in $I_j$ according to the optimal solution, while $SC_{\textrm{PIPM}}(I_j)$ is the Social Cost of the agents whose position is in $I_j$ according to the output of the PIPM.
    Notice that, by definition of PIPM, agents whose position is in $I_j$ are assigned to the same facility.
    Let $\vec y=(y_1,\dots,y_m)$ be the positions at which the PIPM places the facilities when the input is $\vec x$.
    For every $j\in [m]$ we have two cases: either $y_j\in [x_{(j-1)k+1},x_{jk}]$ or $y_j\notin [x_{(j-1)k+1},x_{jk}]$.
    We denote with $J \subset [m]$ the set of indexes for which it holds $y_j\in [x_{(j-1)k+1},x_{jk}]$.
    Notice that, by definition of PIPM, we have $r,r+1\in J$, thus $J$ is always not empty.
    Let us assume that $j\in J$, so that $y_j\in [x_{(j-1)k+1},x_{jk}]$. 
    In this case, as previously shown in the proof of Theorem \ref{thm:PMM_SC}, we have that $SC_{\textrm{PIPM}}(I_j)\le (k-1)|x_{k(j-1)}+1-x_{kj}|\le (k-1)SC_{opt}(I_j)$.
    %
    %
    Let us now consider the case in which $y_j\notin [x_{(j-1)k+1},x_{jk}]$.
    By definition of $y_j$, we must have that $x_{jk}<y_j=x_{r'k}+|x_{r'k}-y_{r'}|$, where $r'<j$.
    Since $x_{r'k}\le x_{(j-1)k+1}\le \dots \le x_{jk}<y_j=x_{r'k}+|x_{r'k}-y_{r'}|$, we have that $|y_j-x_{(j-1)k+l}|<|x_{r'k}-y_{r'}|\le SC_{opt}(I_{r'})$ for every $l=1,2,\dots,k$.
    In particular, we have $SC_{\textrm{PIPM}}(I_j)\le k SC_{opt}(I_{r'})$.
    Therefore, we have that
    \begin{align}
    \label{eq:estimate_ar}
        \nonumber\frac{SC_{\textrm{PIPM}}(\vec x)}{SC_{opt}(\vec x)}&=\frac{\sum_{j=1}^m SC_{\textrm{PIPM}}(I_j)}{\sum_{j=1}^m SC_{opt}(I_j)}\\
        \nonumber&\le \frac{\sum_{j\in J}(t_jk+k-1) SC_{opt}(I_j)}{\sum_{j=1}^m SC_{opt}(I_j)}\\
        &\le \frac{\sum_{j\in J}(t_jk+k-1) SC_{opt}(I_j)}{\sum_{j\in J} SC_{opt}(I_j)}
    \end{align}
    where $J\subset [m]$ is the non-empty set of indexes for which it holds $y_j\in [x_{(j-1)k+1},x_{jk}]$, while $t_j$ is the number of sets $I_\ell$ such that $\ell\notin J$, $\ell> j$, and $\ell<j'$ for every $j'\in J$ such that $j<j'$.
    Let us now rewrite \eqref{eq:estimate_ar} as 
    \[
    \frac{SC_{\textrm{PIPM}}(\vec x)}{SC_{opt}(\vec x)}\le \sum_{j\in J}\beta_j\alpha_j,
    \]
    where $\beta_j=(t_jk+k-1)$ and $\alpha_j=\frac{SC_{opt}(I_j)}{\sum_{j\in J}SC_{opt}(I_j)}$.
    Since $\sum_{j\in J}\alpha_j=1$, we have that
    \[
    \frac{SC_{\textrm{PIPM}}(\vec x)}{SC_{opt}(\vec x)}\le \max_{j\in J}\;\beta_j.
    \]
    Finally, since $r,r+1\in J$, we have that the maximum possible value of $t_j$ is $r-1$, hence $\max\;\beta_j\le (r-1)k+k-1=\frac{m}{2}k-1=\frac{n}{2}-1$.
    We then conclude that the approximantion ratio of PIPM is less than or equal to $\frac{n}{2}-1$.
    To prove that the bound is tight, consider the following instance: $x_1=x_2=\dots = x_{\frac{n}{2}-1}=1$, $x_{\frac{n}{2}}=2$, $x_{\frac{n}{2}+1}=3$, and $x_{\frac{n}{2}+2}=x_{\frac{n}{2}+3}=\dots = x_{n}=4$.
    It is easy to see that the optimal cost of this instance is $2$.
    On this instance, the PIPM places $y_r$ at $2$, $y_{r+1}$ at $3$, all the facilities $y_\ell$ with $\ell<r$ at $0$, and all the other ones at $5$.
    The cost of the mechanism is then $2(\frac{n}{2}-1)$, thus the approximation ratio of the mechanism is greater or equal to $\frac{n}{2}-1$, which concludes the proof.

\textbf{The approximation ratio of the PIPM with respect to the Maximum Cost.}
    Let us now consider the Maximum Cost.
    Again, we rewrite the optimal and the mechanism Maximum Cost as the maximum of $m$ different costs, i.e. $MC_{opt}(\vec x)=\max_{j\in[m]}\big\{MC_{opt}(I_j)\big\}$ and $MC_{\textrm{PIPM}}(\vec x)=\max_{j\in[m]}\big\{MC_{\textrm{PIPM}}(I_j)\big\}$, respectively, where $MC_{opt}(I_j)$ is the Maximum Cost of the agents whose position is in $I_j$ according to the optimal solution and $MC_{\textrm{PIPM}}(I_j)$ is the Maximum Cost of the agents whose position is in $I_j$ according to the output of the mechanism.
    If $j\in J$, using the same argument used for the Social Cost, we retrieve $MC_{\textrm{PIPM}}(I_j)\le 2 MC_{opt}(I_j)$.
    If $j\notin J$, we can use the same argument used to prove Theorem \ref{thm:PMM_MC} to show that there exists an index $r\in J$ such that $r<j$ and $|x_\ell-y_j|\le MC_{\textrm{PIPM}}(I_{r})$ for every $x_\ell\in I_j$.
    Then, we have 
    \begin{align*}
    \frac{\max_{j\in[m]}\big\{MC_{\textrm{PIPM}}(I_j)\big\}}{\max_{j\in[m]}\big\{MC_{opt}(I_j)\big\}}&\le\frac{\max_{j\in[m]}\big\{MC_{\textrm{PIPM}}(I_j)\big\}}{\max_{j\in J}\big\{MC_{opt}(I_j)\big\}}\\
    &\le\frac{\max_{j\in J}\big\{MC_{\textrm{PIPM}}(I_j)\big\}}{\max_{j\in J}\big\{MC_{opt}(I_j)\big\}}\\
    &\le\frac{\max_{j\in J}\big\{2MC_{opt}(I_j)\big\}}{\max_{j\in J}\big\{MC_{opt}(I_j)\big\}}\\
    &\le 2.
    \end{align*}
    Hence the approximation ratio with respect to the Maximum Cost is less than $2$.
    To prove that this bound is tight, it suffice to consider the same instance used for the Social Cost case.
\end{proof}

\begin{proof}[Proof of Theorem \ref{thm:EIG_GSP}]
%
Let $\vec x$ be the vector containing the agents' real positions ordered from left to right.
    Let $y_1$ and $y_2$ be the positions at which the EIG mechanism places the two facilities on the truthful input.
Toward a contradiction, let $I:=\{x_{i_1},\dots, x_{i_s}\}$ be the real positions of a group of agents that can manipulate.
We denote with $I'=\{x'_{i_1},\dots,x'_{i_s}\}$ a group manipulation performed by $I$ such that \begin{enumerate*}[label=(\roman*)]
\item the cost of every agent in the group after the manipulation is less than or equal to the cost they would get by reporting truthfully,
\item at least one of costs of the agents in the group is strictly lower after the group manipulation.
\end{enumerate*}
We denote with $y'_1$ and $y_2'$ the positions at which IC places the facilities after the group manipulation.
Furthermore, assume that $I$ is minimal, i.e. there are no subsets of $I$ that can collude, thus $x_{i_l}\neq x_{i_l}'$ for every $l\in [s]$.
Notice that just swapping the capacities of the facilities without altering their position cannot bring any benefit to the agents.
We now show that no agent whose real position is $y_1=x_{n-\bar c}$ or $y_2=x_{\bar c + 1}$ are in $I$.
Indeed, if $x_{n-\bar c},x_{\bar c + 1}\in I$, then it must be that $y_1'=y_1$ and $y_2'=y_2$, as otherwise one of the two agents would increase its cost after manipulation, which is impossible.
If only one of the agents is in $I$, namely $x_{n-\bar c}=y_1$, then it must be that $y_1=y_1'$, thus it must be that $y_2'\neq y_2$.
If $y_2'<y_2$, we must have that none of the agents whose real position is to the right of $y_2$ takes part in the manipulation $I$, as otherwise their cost would strictly increase.
However, this is impossible as the $y_2$ facility is always placed at the position of the $(\bar c + 1)$-th agent from the left, which cannot be changed unless at least one agent whose real position is to the right of $y_2$ reports a value that is to the left of $y_2$.
Let us then consider the case in which $y_2'>y_2$.
Similarly, in order to alter the position of the $(\bar c + 1)$-th agent from the left, we must have that an agent on the left of $y_2$ reports a position that is on the right of $y_2$.
Hence, at least an agent whose real position is $x_i\le y_2$ reports a position $x_i'$ that is in $I'_2$ after the manipulation.
Again, this is a contradiction, since every agent whose position is in $I_2'$ is assigned to $y_2'>y_2\ge y_1$ and the agent's real position is to the left of $y_2$ hence $\min_{j=1,2}\{|x_i-y_j|\}<|x_i-y_2'|$.
Thus, the agents whose real position is $x_{n-\bar c}$ or $x_{\bar c + 1}$ cannot take part in the coalition.
%
%

%
%
Let us now consider a coalition of agents $I$ that is able to lower the cost of an agent, whose real location is $x_{i_1}$, without increasing the cost of the other agents in $I$.
Without loss of generality, let us assume that $x_{i_1}< y_1$ so that it must be that $y_1'<y_1$.
If $y_1'<y_1$, it must be the case that at least one agent whose real position, namely $x_t$, was on the right of $y_1$ reports a position on the left of $y_1$, i.e. $x_t'\in I_1'$, where $x_t'$ is the misreport of the agent whose real position was $x_t$.
If that agent was assigned to $y_2$ in the truthful input, it must be that $|x_t-y_2|\le|x_t-y_1|<|x_t-y_1'|$, since $y_1'<y_1\le x_t$.
Thus, the agent at $x_t$ is increasing its cost, which contradicts the fact that $x_t$ takes part in the manipulating coalition $I$.
Similarly, if $x_t$ was assigned to $y_1$ according to the truthful input, its cost still increases after the manipulation, which concludes the proof.
\end{proof}

\begin{proof}[Proof of Theorem \ref{thm:EIGsc}]
\textbf{Approximation ratio for the SC.}
Let $\vec x$ be the vector containing the agents' reports ordered from left to right.
Since we are placing two facilities, the optimal solution splits the agents into two continuous sets, one served by the facility with capacity $c_1$ and the other served by the facility with capacity $c_2$.
We denote with $I_i$ the set of agents assigned to the facility with capacity $c_i$ according to the optimal solution. 
Without loss of generality, let us assume that $a\le b$ for every $a\in I_1$ and $b\in I_2$, thus if we denote with $y_i$ the position of the facility serving the agents in $I_i$, it holds $y_1\le y_2$.
Let us now denote with $y_1'\le y_2'$ the positions at which the EIG mechanism places the two facilities and let us denote with $I_i'$ the set of agents assigned to the facility located at $y_i'$.
By definition, all the agents in $I_1'$ are on the left of the agents in $I_2'$
It holds that $SC_{EIG}(\vec x)=SC_{EIG}(I_1')+SC_{EIG}(I_2')$, where $SC_{EIG}(I_i')$ is the SC of the agents in $I_i'$ according to the output of EIG.
Since EIG assigns the agents to their closest facility, we have that $SC_{EIG}(\vec x)\le SC_{EIG}(I_1)+SC_{EIG}(I_2)$, where $SC_{EIG}(I_i)$ is the Social Cost of the agents in $I_i$ if they are assigned to a facility placed at $y_i$ by the mechanism. 
Let $n_1$ be the number of agents in $I_1$.
Then, by definition of EIG, the value $SC_{EIG}(I_1)$ is the cost of a mechanism for the $1$-FLP that, given in input the reports of $n_1$ agents, locates the facility at the position of the $(n-\bar c)$-th agent to the left.
This mechanism has an approximation ratio equal to $AR_1\le\max\{\frac{n-\bar c-1}{n_1-(n-\bar c-1)},\frac{n_1-(n-\bar c)}{n-\bar c}\}$.
Indeed, given $m$ the position of the median agent in $\{x_1,\dots,x_{n_1}\}$, it is easy to see that the worst ratio between the SC of EIG and the optimal SC cost is the one in which all the agents are positioned at $m$ or at $x_{n-\bar c}$.
If $m=x_{n-\bar c}$, there is nothing to prove.
If $m\neq x_{n-\bar c}$, we have $SC_{opt}(x_1,\dots,x_{n_1})\ge \min\{n-\bar c,n_1-(n-\bar c-1)\}|x_{n-\bar c}-m|$.
Likewise, we have $SC_{EIG}(I_1)\le \max\{n-\bar c -1,n_1-(n-\bar c)\}|x_{n-\bar c}-m|$, hence 
\begin{align*}
    AR_1&\le\frac{\max\{n-\bar c -1,n_1-(n-\bar c)\}}{\min\{n-\bar c,n_1-(n-\bar c-1)\}}\\
    &=\max\Big\{\frac{n-\bar c-1}{n_1-(n-\bar c-1)},\frac{n_1-(n-\bar c)}{n-\bar c}\Big\}.
\end{align*}
Since $n \in\{n-\bar c,\bar c\}$, it holds $\frac{n-\bar c-1}{n_1-(n-\bar c-1)}\le n-\bar c -1$ and $\frac{n_1-(n-\bar c)}{n-\bar c}\le \frac{2\bar c - n}{n-\bar c}$, thus 
\[
SC_{EIG}(I_1)\le \max\Big\{\frac{2\bar c - n}{n-\bar c},n-\bar c-1\Big\} SC_{opt}(I_1).
\]
Through a similar argument, we infer that 
\[
SC_{EIG}(I_2)\le AR_2 \cdot SC_{opt}(I_2),
\]
where 
\[
AR_2=\max\Big\{\frac{\bar c-n_1}{(n-n_1)-(\bar c-n_1)},\frac{(n-n_1)-(\bar c-n_1+1)}{\bar c-n_1+1}\Big\}.
\]
Again, it is easy to see that $\frac{\bar c-n_1}{(n-n_1)-(\bar c-n_1)}=\frac{\bar c-n_1}{n-\bar c}\le\frac{2\bar c-n}{n-\bar c}$ and $\frac{(n-n_1)-(\bar c-n_1+1)}{\bar c-n_1+1}=\frac{n-\bar c-1}{\bar c-n_1+1}\le n-\bar c-1$, thus
\[
SC_{EIG}(I_2)\le \max\Big\{\frac{2\bar c - n}{n-\bar c},n-\bar c-1\Big\} SC_{opt}(I_2).
\]
Finally, since it holds $AR_1,AR_2\le \max\{n-\bar c -1,\frac{\bar c}{n-\bar c}-1\}$, we conclude that $ar(EIG)\le \max\{n-\bar c -1,\frac{\bar c}{n-\bar c}-1\}$.
Lastly, consider the following two instances.
In the first one we have $x_1=\dots=x_{n-\bar c -1}=0$, $x_{n-\bar c}=1$, and $x_i=5$ for every other $i\in [n]$.
The optimal solution has a cost equal to $1$, while EIG has a cost equal to $n-\bar c-1$, hence $ar(EIG)\ge n-\bar c-1$.
In the second instance we have $x_1=\dots=x_{n-\bar c }=0$, $x_{n-\bar c + 1}=\dots=x_{\bar c}=1$, and $x_i=2$ for all the other $i\in [n]$.
The optimal cost is $\min\{n-\bar c, \bar c-(n-\bar c)\}=\min\{n-\bar c, 2\bar c-n)\}$, while the mechanism cost is $ (2\bar c-n)$, thus we have that $ar(EIG)\ge \max\{1,\frac{2\bar c-n}{n-\bar c}\}=\max\{1,\frac{\bar c}{n-\bar c}-1\}$, which concludes the proof.

\end{proof}

\begin{proof}[Proof of Theorem \ref{thm:lowerboundMCnodd2}]
    \textbf{Lower bound with respect to the Maximum Cost.}
    It follows by the same argument used in the proof of Theorem \ref{thm:lowerboundMCnodd}.
    In this case however, the instance to consider is $x_1=\dots=x_{\bar c}=0$ and $x_{\bar c+1}=\dots=x_n=1$.
    \textbf{Lower bound with respect to the Social Cost.}
    Let us consider the following instance: $x_1=\dots=x_{\bar c+1}=0$ and $x_{\bar c+2}=\dots=x_{n}=1$.
    The lower bound on the approximation ratio of any deterministic and truthful mechanisms follows by the same argument used to prove Theorem \ref{thm:lowerboundSC}.
    By the same argument used in Theorem \ref{thm_anonymLB}, any truthful and anonymous mechanism places the two facilities at the same distance from $0$.
    Since we have that $\bar c\ge \floor{\frac{n}{2}}$, we have that $\bar c+1\ge n-\bar c-1$, hence we get that the approximation ratio of any truthful, anonymous, and deterministic mechanism is larger than $(n-\bar c-1)$.
\end{proof}



\begin{proof}[Proof of Theorem \ref{thm:GSP_IC}]

Since the EIG mechanism and the IC mechanism do coincide when $n=2k+1$, $c_1=k+1$, and $c_2=k$, it follows directly from Theorem \ref{thm:EIG_GSP}.
\end{proof}

\begin{proof}[Proof of Theorem \ref{thm:IMboundsSC}]
\textbf{Approximation ratio for the MC.}
    Without loss of generality, let us assume that the optimal solution splits the agents' positions as it follows $I_1=\{x_1,\dots,x_k\}$ and $I_2=\{x_{k+1},\dots,x_n\}$.
    Thus, the first $k$ agents are served by the facility with capacity $c_2=k$, while the remaining $k+1$ are served by the facility with capacity $c_1=k+1$. 
    The optimal location of the two facilities is $y_2=\frac{x_1+x_k}{2}$ and $y_1=\frac{x_{k+1}+x_n}{2}$, thus the optimal MC cost is $\frac{1}{2}\max\{|x_1-x_{k}|,|x_n-x_{k+1}|\}$.
    If the mechanism splits the agents optimally, i.e. the first $k$ agents are served by the facility placed at $x_k$ and the remaining $k+1$ by the facility placed at $x_{k+2}$, we have that the maximum cost of the mechanism is lower or equal to $\max\{|x_1-x_{k}|,|x_n-x_{k+1}|\}$.
    We then conclude that, on these instances the $ar_{MC}(IC)\le 2$.

    Let us now consider the case in which the optimal partition of agents does not coincide with the one returned by the mechanism.
    Again, let us assume that the optimal partition is still $I_1=\{x_1,\dots,x_k\}$ and $I_2=\{x_{k+1},\dots,x_n\}$, but, in this case, the partition returned by the mechanism is $I_1'=\{x_1,\dots,x_{k+1}\}$ and $I_2'=\{x_{k+2},\dots,x_n\}$.
    The cost of the mechanism is then $\max\{|x_1-x_k|,|x_{k+1}-x_k|,|x_{k+2}-x_n|\}$.
    Since the $(k+1)$-th agent is paired with $x_k$, it means that $|x_{k+1}-x_k|\le|x_{k+1}-x_{k+2}|$, thus, it holds
    \begin{align*}
        \max\{&|x_1-x_k|,|x_{k+1}-x_k|,|x_{k+2}-x_n|\}\\
        &\le\max\{|x_1-x_k|,|x_{k+1}-x_{k+2}|,|x_{k+2}-x_n|\}\\
        &\le\max\{|x_1-x_k|,|x_{k+1}-x_n|\}.
    \end{align*}
    %
    %
    %
    We then infer that the cost of the mechanism is again, at most double the optimal cost, i.e. $ar(IC)\le 2$.
    To prove that this bound is tight, we consider the following instance: let $x_1=\dots=x_k=0$, $x_{k+1}=\frac{1}{3}+\epsilon$, $x_{k+2}=\frac{2}{3}$, and $x_{k+3}=\dots=x_{n}=1$.
    For every $\epsilon>0$, IC locates the facility with capacity $c_2$ at $x_k=0$ and the facility with capacity $c_1$ at $x_{k+2}$, which leads to a Maximum Cost equal to $\frac{1}{3}$.
    However the optimal Maximum Cost is $\frac{1}{6}+\frac{\epsilon}{2}$, thus $ar(IC)\ge \frac{\frac{1}{3}}{\frac{1}{6}+\frac{\epsilon}{2}}=\frac{2}{1+3\epsilon}$, which converges to $2$ as $\epsilon$ goes to $0$.

    \textbf{Approximation ratio for the SC when $n>5$.}
    Let $\vec x=(x_1,\dots,x_n)$ be the agents' reports ordered from left to right, i.e. $x_1\le x_2\le \dots\le x_n$.
    Notice that every optimal solution splits the set of agents' positions into two subsets: the set containing the positions of the agents assigned to the facility with capacity $c_1$, namely $I_1$, and the set of the positions of the agents assigned to the facility with capacity $c_2$, namely $I_2$.
    Since the set of agents served by one facility is continuous and $c_1= c_2+1$, we have only two possibilities: $I_2=\{x_1,\dots,x_k\}$ and $I_1=\{x_{k+1},\dots,x_n\}$ or $I_1=\{x_1,\dots,x_{k+1}\}$ and $I_2=\{x_{k+2},\dots,x_n\}$.
    Since the other case is symmetric, we limit our discussion to the set of instances in which $I_1=\{x_1,\dots,x_{k+1}\}$ and $I_2=\{x_{k+2},\dots,x_n\}$.
    Let us now consider the solution found by the IC mechanism.
    As the optimal solution, the IC mechanism splits the set of the agents' positions into two sets, namely $I_1'$ and $I_2'$, respectively.
    Without loss of generality, let us assume that all the positions in $I_1'$ are on the left of all the positions in $I_2'$, thus the agents whose position is in $I_1'$ are assigned to the facility placed at $x_k$, while the agents whose position is in $I_2'$ are assigned to the facility at $x_{k+2}$.
    Since the IC assigns every agent to its closest facility, we have that 
    \begin{align*}
        SC_{IC}(\vec x)&=\sum_{x_i\in I_1'} |x_i-x_k|+\sum_{x_i\in I_2'}|x_i-x_{k+2}|\\
        &\le \sum_{x_i\in I_1} |x_i-x_k|+\sum_{x_i\in I_2}|x_i-x_{k+2}|\\
        &=  SC_{IC}(I_1)+SC_{IC}(I_2),
    \end{align*} 
    where $SC_{IC}(I_1)$ ($SC_{IC}(I_2)$) is the Social Cost of the agents whose position is in $I_1$ ($I_2$) if they are assigned to the facility placed at $x_k$ ($x_{k+2}$).
    %
    %
    Since $I_1=\{x_1,\dots,x_{k+1}\}$, $SC_{IC}(I_1)$ represents the Social Cost of a mechanism for the $1$-FLP that, given the position of $k+1$ agents, places a facility at the position of the second agent to the right.
    Since $k>2$, we have that the approximation ratio of such mechanism is $\frac{k-1}{2}$, so that $SC_{IC}(I_1)\le \frac{k-1}{2}SC_{opt}(I_1)$.
    Similarly, $SC_{IC}(I_2)$ is the cost of the leftmost mechanism over a set of $k$ agents, which has an approximation ratio equal to $k-1$, thus $SC_{IC}(I_2)\le (k-1)SC_{opt}(I_2)$.
    We then conclude that $ar_{SC}(IC)\le \frac{\frac{k-1}{2}SC_{opt}(I_1)+(k-1)SC_{opt}(I_2)}{SC_{opt}(I_1)+SC_{opt}(I_2)}$, since $\frac{k-1}{2}\le k-1$, we have $ar_{SC}(IC)\le k-1$.
    To conclude, consider the following instance: $x_1=\dots=x_{k+1}=0$, $x_{k+2}=1$, and $x_{k+3}=\dots=x_{n}=2$. 
    Indeed, the optimal Social Cost of this instance is $1$, while the cost of IC is $k-1$, thus $ar_{SC}(IC)=k-1$.
     \textbf{Approximation ratio with respect to the SC when $n=3,5$.}
    Finally, we consider the case $n=5$, the case $n=3$ is similar.
    Since $c_1=3$ and $c_2=2$, the optimal solution splits the agents' positions as $I_1=\{x_1,x_2,x_3\}$ and $I_2=\{x_4,x_5\}$ or as $I_1=\{x_1,x_2\}$ and $I_2=\{x_3,x_4,x_5\}$.
    Notice that, in both cases, placing the facilities at $x_2$ and $x_4$ would result in an optimal solution as long as we can correctly select where to place $c_1$ and $c_2$.
    Finally, we notice that if we place $c_1$ at $x_2$, the Social Cost of the solution is $|x_1-x_2|+|x_2-x_3|+|x_4-x_5|$, while if we place $c_1$ at $x_4$, the cost is $|x_1-x_2|+|x_3-x_4|+|x_4-x_5|$.
    It is then easy to see that the optimal solution places $c_1$ at $x_2$ if and only if $|x_2-x_3|\le |x_3-x_4|$.
    Since this is the routine that defines the IC mechanism, we conclude that the IC is optimal when $n=5$, $c_1=3$, and $c_2=2$.
\end{proof}

\begin{proof}[Proof of Theorem \ref{thm:lowerboundMCnodd}]
    It follows directly from Theorem \ref{thm:lowerboundMCnodd2}.
\end{proof}

\subsection*{Missing Examples}

\begin{example}
    \label{ex:noGSP2}
    Let us consider the following FLP problem with $5$ agents and $2$ facilities.
    We have that $x_1=0$, $x_2=1$, $x_3=x_4=2$, and $x_5=4$.
    Let us consider the percentile mechanism induced by the percentile vector $\vec p=(0.25,0.75)$.
    By definition of the percentile mechanism, we have that the facilities will be placed at the position of the $(\floor{(5-1)0.25}+1)$-th and $(\floor{(5-1)0.75}+1)$-th agents from the left, i.e. $x_2$ and $x_4$ in the truthful input.
    Notice that the $(0.25,0.75)$-percentile mechanism places the facilities at the same position as the EIG mechanism if both the facilities have capacity equal to $3$.
    However, if we use the percentile mechanism, the agent $x_4$ and $x_1$ can collude: indeed, if $x_4$ reports $0$ instead of $2$, the new input is $(0,0,1,2,4)$, thus the facilities are placed at $0$ and $2$ which reduces the cost of the agent at $0$ and leaves the cost of the agent at $2$ unchanged.
    Notice that, if we used the EIG mechanism to locate the facilities, the agent $x_4$ would have be forced to be assigned to the facility at $0$, which prevents the group manipulation.
\end{example}

\subsection*{The InnerGap Mechanism}

Given a number of agents $n$ and two facilities whose capacities are $c_1=c_2=k\ge \frac{n}{2}$, the routine of the IG mechanism is as follows:
\begin{enumerate*}[label=(\roman*)]
    \item Let $\vec x = (x_1, \dots, x_n)$ be the vector containing the agents' reports ordered from left to right, i.e. $x_i\le x_{i+1}$.     
    \item We set $y_1=x_{n-k}$ and $y_2=x_{k+1}$. 
    Since $c_1=c_2$, we do not need to specify the capacity of the facility placed at $y_1$ and $y_2$.
    \item Lastly, every agent is assigned to its closest facility.
\end{enumerate*}
Since $\bar c=c_1=c_2=k$, we have that $x_{n-\bar c}=x_{n-k}$ and $x_{\bar c+1}=x_{k+1}$, hence, for every $\vec x\in\erre^n$, the output of EIG and IG are the same, thus the two mechanisms do coincide.
It was shown in \cite{ijcai2022p75} that the IG is truthful, however, owing to Theorem \ref{thm:EIG_GSP}, we have that IG is strong GSP.

\begin{theorem}
    The IG is strong Group Strategyproof.
\end{theorem}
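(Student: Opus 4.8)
The plan is to reduce this claim entirely to the strong group strategyproofness of the EIG mechanism, established in Theorem \ref{thm:EIG_GSP}. The key fact has already been recorded in the discussion immediately preceding the statement: when $c_1=c_2=k$, one has $\bar c=k$, so the two anchor positions used by the EIG, namely $x_{n-\bar c}$ and $x_{\bar c+1}$, coincide exactly with the positions $y_1=x_{n-k}$ and $y_2=x_{k+1}$ prescribed by the IG. Since both mechanisms then assign every agent to its closest facility, their outputs agree on every input $\vec x\in\erre^n$.

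First I would state precisely that the IG is the restriction of the EIG to the subfamily of instances with equal capacities $c_1=c_2=k\ge\frac{n}{2}$. Here the capacity-assignment choice that distinguishes the two branches of the EIG (placing the larger versus the smaller capacity at $y_1$ depending on whether $n_1\ge n_2$) is vacuous, because the two facilities have identical capacity; hence the IG need not specify which facility sits at $y_1$ and which at $y_2$, and the matching of agents to facilities is unaffected. This confirms that for every report profile the facility locations and the induced agent-to-facility assignment produced by the IG are identical to those produced by the EIG.

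Then I would observe that strong group strategyproofness is a property depending only on the map from report profiles to outcomes (locations together with the matching, and hence the resulting cost vector). Since the IG and the EIG induce the same outcome, and therefore the same cost for every agent on every profile, any coalition that could profitably manipulate the IG would be a profitable manipulation of the EIG as well. As Theorem \ref{thm:EIG_GSP} rules out the latter, no such coalition exists for the IG either, giving the claim. This is exactly the argument pattern used for the analogous Theorem \ref{thm:GSP_IC} concerning the IC mechanism.

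There is essentially no obstacle here: all the real work lives in Theorem \ref{thm:EIG_GSP}, and the only thing to check is the coincidence of outputs, which follows immediately from the identification of the anchor indices. The single point warranting a sentence of care is verifying that the equal-capacity case does not introduce any degenerate tie-breaking in the EIG that the IG resolves differently; but since equal capacities make the branch choice irrelevant to both locations and costs, no discrepancy can arise, and the proof collapses to a one-line appeal to Theorem \ref{thm:EIG_GSP}.
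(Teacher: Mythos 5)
Your proposal is correct and follows exactly the paper's own argument: the paper likewise notes that with $c_1=c_2=k$ one has $\bar c=k$, so the EIG anchors $x_{n-\bar c}$ and $x_{\bar c+1}$ coincide with the IG locations $x_{n-k}$ and $x_{k+1}$, the two mechanisms produce identical outputs on every profile, and strong group strategyproofness is then inherited from Theorem \ref{thm:EIG_GSP}. Your extra remark that the capacity-assignment branch of the EIG is vacuous under equal capacities is a sensible sanity check, but the route is the same.
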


 \begin{proof}
     Since the routine of the IG is the same as the one of the EIG, the results follows from Theorem \ref{thm:EIG_GSP}.
 \end{proof}

Similarly, we extend the results on the approximation ratio of the IG mechanism with respect to the SC and MC.

\begin{theorem}
    Let $n$ be an odd number, then $ar_{MC}(IG)=2$.
    Moreover, it holds $ar_{SC}(IG)=\max\{(n-k-1),(\frac{k}{n-k}-1)\}$.
\end{theorem}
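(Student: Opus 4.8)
The plan is to derive both claims as immediate corollaries of Theorem~\ref{thm:EIGsc}, using the fact---already established in the paragraph preceding the statement---that the IG mechanism coincides with the EIG mechanism in this regime. Concretely, when $c_1=c_2=k$ we have $\bar c=\max\{c_1,c_2\}=k$, whence $x_{n-\bar c}=x_{n-k}$ and $x_{\bar c+1}=x_{k+1}$; thus the two mechanisms place their facilities at identical positions and assign every agent identically on every input $\vec x\in\erre^n$. I would first record this as an equality of outputs, which forces $SC_{IG}(\vec x)=SC_{EIG}(\vec x)$ and $MC_{IG}(\vec x)=MC_{EIG}(\vec x)$ for all $\vec x$. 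Since the approximation ratio with respect to either cost is, by definition, the supremum over all instances of the ratio between the mechanism's cost and the optimal cost, two mechanisms producing the same output on every instance necessarily attain the same approximation ratios, so $ar_{SC}(IG)=ar_{SC}(EIG)$ and $ar_{MC}(IG)=ar_{MC}(EIG)$.

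With this transfer in hand, I would simply substitute $\bar c=k$ into the conclusions of Theorem~\ref{thm:EIGsc}. The maximum-cost claim follows instantly, as $ar_{MC}(EIG)=2$ carries over verbatim. For the social cost, Theorem~\ref{thm:EIGsc} gives $ar_{SC}(EIG)=\max\{(n-\bar c-1),(\tfrac{\bar c}{n-\bar c}-1)\}$, and replacing $\bar c$ by $k$ yields $\max\{(n-k-1),(\tfrac{k}{n-k}-1)\}$, exactly the asserted expression.

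There is essentially no substantive obstacle here: all the analytic work has already been carried out in proving Theorem~\ref{thm:EIGsc}, and the argument reduces to justifying the transfer of approximation ratios between two mechanisms with identical per-instance behaviour. The only hypothesis worth checking is that Theorem~\ref{thm:EIGsc} applies, namely that $c_1,c_2\ge\floor{\frac{n}{2}}$; this holds immediately because $c_1=c_2=k\ge\frac{n}{2}\ge\floor{\frac{n}{2}}$, so the substitution is legitimate and the proof is complete. The oddness of $n$ plays no role in the argument, since the EIG bounds of Theorem~\ref{thm:EIGsc} hold for arbitrary $n$.
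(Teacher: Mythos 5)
Your proposal is correct and matches the paper's own proof, which likewise observes that the IG routine coincides with the EIG when $c_1=c_2=k$ (so $\bar c=k$) and then invokes Theorem~\ref{thm:EIGsc} directly. Your additional remarks---checking the hypothesis $c_1,c_2\ge\floor{\frac{n}{2}}$ and noting that the oddness of $n$ is not actually used---are accurate and only make the argument more explicit than the paper's one-line justification.
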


\begin{proof}
     Since the routine of the IG is the same as the one of the EIG, the results follows from Theorem \ref{thm:EIGsc}.
 \end{proof}

\begin{theorem}
    Let $c_1=c_2=k\in\mathbb{N}$ and $n\in\mathbb{N}$. 
    Then, every truthful deterministic mechanism $M$ that places two facilities with capacity $c_1=c_2=k$ is such that $ar_{MC}(M)\ge 2$.
    Likewise, any truthful and deterministic mechanism $M$ is such that $ar_{SC}(M)\ge 3$.
    Moreover, if $M$ is also anonymous, then $ar_{SC}(M)\ge n-k-1$.
\end{theorem}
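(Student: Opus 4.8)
The plan is to recognize the statement as the $c_1=c_2=k$ specialization of Theorem~\ref{thm:lowerboundMCnodd2}. In the abundant-capacity framework we have $c_1=c_2=k\ge\frac{n}{2}\ge\floor{\frac{n}{2}}$, so the hypotheses of Theorem~\ref{thm:lowerboundMCnodd2} are satisfied, and $\bar c=\max\{c_1,c_2\}=k$. Substituting $\bar c=k$ into the three bounds established there yields $ar_{MC}(M)\ge 2$, $ar_{SC}(M)\ge 3$, and, for anonymous $M$, $ar_{SC}(M)\ge n-\bar c-1=n-k-1$. Thus no new argument is strictly required: the theorem is a direct corollary, exactly as Theorem~\ref{thm:lowerboundMCnodd} is obtained for the odd-$n$ case.

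For transparency I would nonetheless recall the three instances that drive these bounds, each built from two clusters of coincident points. For the Maximum Cost I would start from $x_1=\dots=x_k=0$ and $x_{k+1}=\dots=x_n=1$, which admits a clean zero-cost separation of the two facilities, and then slide one agent of the left cluster to $-t$; as in the proof of Theorem~\ref{thm:lowerboundMC}, truthfulness prevents the facility serving that agent from moving left, so its cost is at least $t$ while the optimum is $t/2$, forcing a ratio of at least $2$.

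For the Social Cost bound of $3$ I would use $x_1=\dots=x_{k+1}=0$ and $x_{k+2}=\dots=x_n=1$ and replicate the argument of Theorem~\ref{thm:lowerboundSC}: since the left facility has capacity only $k$, the surplus agent at $0$ must share a facility with the cluster at $1$, and displacing that surplus agent slightly to the right while keeping it attached to the left facility (justified by truthfulness) pushes the mechanism cost toward $3$ against an optimum approaching $1$.

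For the anonymous bound I would keep the instance $x_1=\dots=x_{k+1}=0$, $x_{k+2}=\dots=x_n=1$ and invoke the equidistance argument of Theorem~\ref{thm_anonymLB}: anonymity forces every facility serving a point at $0$ to sit at a common distance $|\lambda|$ from $0$, so the total cost equals $|\lambda|(k+1)+|1-\lambda|(n-k-1)$. Since $k\ge\frac{n}{2}$ gives $k+1\ge n-k-1$, the heavier weight sits on the left cluster, the minimiser is $\lambda=1$, and the mechanism cost is at least $n-k-1$ against an optimum of $1$. The only point needing care is this last inequality $k+1\ge n-k-1$, which is immediate from $k\ge\frac{n}{2}$; hence there is no genuine obstacle, and the whole statement reduces cleanly to Theorem~\ref{thm:lowerboundMCnodd2}.
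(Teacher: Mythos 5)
Your proposal is correct and follows essentially the same route as the paper: the paper's own proof is a one-line reduction to Theorem~\ref{thm:lowerboundMCnodd2} (noting that the IG setting satisfies $c_1,c_2\ge\floor{\frac{n}{2}}$ with $\bar c=k$), and the instances you recall for transparency are precisely those used in the proofs of Theorems~\ref{thm:lowerboundMC}, \ref{thm:lowerboundSC}, \ref{thm_anonymLB}, and \ref{thm:lowerboundMCnodd2}.
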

\begin{proof}
     Since the routine of the IG is the same as the one of the EIG, the results follows from Theorem \ref{thm:lowerboundMCnodd2}.
 \end{proof}

\subsection*{The Innerpoint Mechanism.}

Given an even number $n=2k$ and two facilities whose capacities are $c_1=c_2=k$, the routine of the Innerpoint Mechanism (IM) is as follows:
\begin{enumerate*}[label=(\roman*)]
    \item Given $\vec x = (x_1, \dots, x_n)$ the vector containing the agents' reports ordered from left to right.     
    \item We set $y_1=x_{k}$ and $y_2=x_{k+1}$. 
    Since $c_1=c_2$, we do not need to specify the capacity of the facility placed at $y_1$ and $y_2$.
    \item Lastly, every agent is assigned to its closest facility.
\end{enumerate*}
It is easy to see that the IM is the IG when $c_1=c_2=\frac{n}{2}$, thus all the results presented for the EIG and IG do apply to this case.

\end{document}